\newtheorem{redrule}{Rule}
\newtheorem{observation}{Observation} 
\begin{document}

\title{Contracting graphs to paths and trees\thanks{Supported by the Research Council of Norway (project SCOPE, 197548/V30) and the French ANR (project AGAPE, ANR-09-BLAN-0159).}}

\author{
Pinar Heggernes\inst{1}
\and Pim van 't Hof\inst{1}
\and Benjamin L\'ev\^eque\inst{2}
\and Daniel Lokshtanov\inst{1}
\and Christophe Paul\inst{2}}

\institute{Department of Informatics, University of Bergen, N-5020 Bergen, Norway.\\
\email{\{pinar.heggernes,pim.vanthof,daniel.lokshtanov\}@ii.uib.no}
\and CNRS, LIRMM, Universit{\'e} Montpellier 2, Montpellier, France\\
\email{\{leveque,paul\}@lirmm.fr}
}

\maketitle
\pagestyle{plain}

\begin{abstract}
Vertex deletion and edge deletion problems play a central role in Parameterized Complexity. Examples include classical problems like {\sc Feedback Vertex Set}, {\sc Odd Cycle Transversal}, and {\sc Chordal Deletion}. Interestingly, the study of edge contraction problems of this type from a parameterized perspective has so far been left largely unexplored. We consider two basic edge contraction problems, which we call {\sc Path}-{\sc Contractibility} and {\sc Tree}-{\sc Contractibility}. Both problems take an undirected graph $G$ and an integer $k$ as input, and the task is to determine whether we can obtain a path or an acyclic graph, respectively, by contracting at most $k$ edges of $G$. Our main contribution is an algorithm with running time $4^{k+O(\log^2 k)} + n^{O(1)}$ for {\sc Path}-{\sc Contractibility} and an algorithm with running time $4.88^k \, n^{O(1)}$ for {\sc Tree}-{\sc Contractibility}, based on a novel application of the color coding technique of Alon, Yuster and Zwick. Furthermore, we show that {\sc Path}-{\sc Contractibility} has a kernel with at most $5k+3$ vertices, while {\sc Tree}-{\sc Contractibility} does not have a polynomial kernel unless coNP $\subseteq$ NP/poly. We find the latter result surprising, because of the strong connection between {\sc Tree}-{\sc Contractibility} and {\sc Feedback Vertex Set}, which is known to have a vertex kernel with size $O(k^2)$.
\end{abstract}

\section{Introduction}

The $\Pi$-{\sc Contractibility} problem takes as input a graph $G$ and an integer $k$, and the question is whether there is a graph $H$ belonging to the graph class $\Pi$ such that $G$ can be contracted to $H$ using at most $k$ edge contractions. In early papers by Watanabe et al.~\cite{WAN81,WAN83} and Asano and Hirata~\cite{AH83}, $\Pi$-{\sc Contractibility} was proved to be NP-complete for several classes $\Pi$. The $\Pi$-{\sc Contractibility} problem fits into a wider and well studied family of graph modification problems, where vertex deletions and edge deletions are two other ways of modifying a graph. $\Pi$-{\sc Vertex Deletion} and $\Pi$-{\sc Edge Deletion} are the problems of deciding whether some graph belonging to graph class $\Pi$ can be obtained from $G$ by $k$ vertex deletions or by $k$ edge deletions, respectively. All of these problems are shown to be NP-complete for most of the interesting graph classes $\Pi$ \cite{NSS01,Yannakakis78,Yannakakis79,Yannakakis81}. However, whereas $\Pi$-{\sc Vertex Deletion} and $\Pi$-{\sc Edge Deletion} have been studied in detail for several graph classes $\Pi$ with respect to fixed parameter tractability~\cite{B94,Cai96,DF99,Marx06,MS07,PRY10}, no such result has been known for $\Pi$-{\sc Contractibility} until now. Very recently, four of the authors proved that $\Pi$-{\sc Contractibility} is fixed parameter tractable when $\Pi$ is the class of bipartite graphs~\cite{HHLP11}, which is, to our knowledge, the only result of this type so far.

Here we study $\Pi$-{\sc Contractibility} when $\Pi$ is the class of acyclic graphs and when $\Pi$ is the class of paths.
Observe that edge contractions preserve the number of connected components. Hence, a graph can be contracted to an acyclic graph 
using $k$ edge contractions if and only if each of its connected components can be contracted to a tree, 
using at most $k$ edge contractions in total. Consequently, for the problems that we consider in this paper, we may assume that the input graph is connected, and it is appropriate to name our problems {\sc Tree-Contractibility} and {\sc Path-Contractibility}.
Both problems are NP-complete, which easily follows from previous results~\cite{AH83,BV87}. We find these problems of particular interest, since their vertex deletion and edge deletion versions are famous and well studied; these problems are widely known as {\sc Feedback Vertex Set}, {\sc Longest Induced Path}, {\sc Spanning Tree}, and {\sc Longest Path}, respectively. All of these problems, except {\sc Spanning Tree}, are known to be NP-complete~\cite{GJ79}. Furthermore, when parameterized by the number of deleted vertices or edges, they are fixed parameter tractable and have polynomial kernels.

A parameterized problem with input size $n$ and parameter $k \le n$ is {\it fixed parameter tractable (FPT)} \cite{DF99} if there is an algorithm with running time $f(k) \, n^{O(1)}$ that solves the problem, where $f$ is a function that does not depend on $n$. Such a problem has a {\it kernel} if, for every instance $(I,k)$, an instance $(I',k')$ can be constructed in polynomial time such that $|I'| \le g(k)$, and $(I,k)$ is a {\sc yes}-instance if and only if $(I',k')$ is a {\sc yes}-instance. A parameterized problem is FPT if and only if it has a kernel~\cite{DF99}. Some FPT problems have the desirable property that the size $g(k)$ of their kernel is a polynomial function. Whether an FPT problem has a polynomial kernel or not has attracted considerable attention over the last years, especially after the establishment of methods for proving non-existence of polynomial kernels, up to some complexity theoretical assumptions~\cite{BFLPST09,BDFH08,BTY09}. In fact, a problem that has been shown to be FPT enters two races: the race for a better kernel and the race for a better running time. During the last decade, considerable effort has been devoted to improving the parameter dependence in the running time of classical parameterized problems. Even in the case of a running time which is single exponential in $k$, lowering the base of the exponential function is considered to be an important challenge. For instance, the running time of {\sc Feedback Vertex Set} has been successively improved from $37.7^k \, n^{O(1)}$~\cite{GGH06} to $10.57^k \, n^{O(1)}$~\cite{DFL07}, and most recently to $5^k \, n^{O(1)}$~\cite{CFL08}.

In this paper, we present results along these established tracks for our two problems {\sc Tree-Contractibility} and {\sc Path-Contractibility}. It can be shown that if a graph $G$ is contractible to a path or a tree with at most $k$ edge contractions, then the longest cycle in $G$ is of size $O(k)$, and hence the treewidth of $G$ is bounded by a function of $k$. Consequently, when parameterized by $k$, fixed parameter tractability of {\sc Tree-Contractibility} and {\sc Path-Contractibility} follows from the well known result of Courcelle \cite{Courcelle90}, as both problems are expressible in monadic second order logic. However, this approach yields very unpractical algorithms whose running times involve huge functions of $k$. Here, we give algorithms with running time $c^k \, n^{O(1)}$ with small constants $c <5$ for both problems. To obtain our results, we use a variant of the color coding technique of Alon, Yuster and Zwick \cite{AYZ95}. Furthermore, we show that {\sc Path-Contractibility} has a linear vertex kernel. On the negative side, we show that {\sc Tree-Contractibility} does not have a polynomial kernel, unless coNP $\subseteq$ NP/poly. This is a contrast compared to the corresponding vertex deletion version, as {\sc Feedback Vertex Set} has a quadratic kernel~\cite{Thomasse09}.

\section{Definitions and notation}

All graphs in this paper are finite, undirected, and simple, i.e., do not contain multiple edges or loops. Given a graph $G$, we denote its vertex set by $V(G)$ and its edge set by $E(G)$. We also use the ordered pair $(V(G),E(G))$ to represent $G$. We let $n= |V(G)|$. Let $G=(V,E)$ be a graph. The {\em neighborhood} of a vertex $v$ in $G$ is the set $N_G(v)=\{w\in V \mid vw\in E\}$ of {\it neighbors} of $v$ in $G$. Let $S\subseteq V$. We write $N_G(S)$ to denote $\bigcup_{v\in S}N_G(v) \setminus S$. We say that $S$ {\em dominates} a set $T\subseteq V$ if every vertex in $T$ either belongs to $S$ or has at least one neighbor in $S$. We write $G[S]$ to denote the subgraph of $G$ {\em induced} by $S$. We use shorthand notation $G{-}v$ to denote $G[V \setminus \{v\}]$ for a vertex $v \in V$, and $G{-}S$ to denote $G[V \setminus S]$ for a set of vertices $S \subseteq V$. A graph is {\it connected} if it has a path between every pair of its vertices, and is {\em disconnected} otherwise. The {\it connected components} of a graph are its maximal connected subgraphs. We say that a vertex subset $S$ of a graph $G$ is {\it connected} if $G[S]$ is connected. A {\it bridge} in a connected graph is an edge whose deletion results in a disconnected graph. A {\it cut vertex} in a connected graph is a vertex whose deletion results in a disconnected graph. A graph is 2-{\em connected} if it has no cut vertex. A 2-{\em connected component} of a graph $G$ is a maximal 2-connected subgraph of $G$. We use $P_\ell$ to denote the graph isomorphic to a path on $\ell$ vertices, i.e., the graph with ordered vertex set $\{p_1, p_2,p_3, \ldots, p_\ell \}$ and edge set $\{p_1p_2, p_2p_3, \ldots,p_{\ell -1}p_\ell \}$. We will also write $p_1p_2 \cdots p_\ell$ to denote $P_\ell$. A {\it tree} is a connected acyclic graph. A vertex with exactly one neighbor in a tree is called a {\it leaf}. A {\it star} is a tree isomorphic to the graph with vertex set $\{a, v_1, v_2, \ldots, v_s\}$ and edge set $\{av_1, av_2, \ldots, av_s\}$. Vertex $a$ is called the {\it center} of the star.

The \emph{contraction} of edge $xy$ in $G$ removes vertices $x$ and $y$ from $G$, and replaces them by a new vertex, which is made adjacent to precisely those vertices that were adjacent to at least one of the vertices $x$ and $y$. 
A graph $G$ is \emph{contractible} to a graph $H$, or $H$-{\it contractible}, if $H$ can be obtained from $G$ by a sequence of edge contractions. Equivalently, $G$ is $H$-contractible if there is a surjection $\varphi: V(G) \rightarrow V(H)$, with $W(h) = \{v \in V(G) \mid \varphi(v) = h\}$ for every $h \in V(H)$, that satisfies the following three conditions: {\bf (1)} for every $h \in V(H)$, $W(h)$ is a connected set in $G$; {\bf (2)} for every pair $h_i,h_j\in V(H)$, there is an edge in $G$ between a vertex of $W(h_i)$ and a vertex of $W(h_j)$ if and only if $h_ih_j \in E(H)$; {\bf (3)} ${\cal W}=\{W(h) \mid h \in V(H)\}$ is a partition of $V(G)$. We say that ${\cal W}$ is an $H$-{\it witness structure} of $G$, and the sets $W(h)$, for $h\in V(H)$, are called {\it witness sets} of ${\cal W}$. It is easy to see that if we contract every edge $uv \in E(G)$, such that $u$ and $v$ belong to the same witness set, then we obtain a graph isomorphic to $H$. Hence $G$ is $H$-contractible if and only if it has an $H$-witness structure. 

If a witness set of contains more than one vertex of $G$, then we call it a {\it big} witness set; a witness set consisting of a single vertex of $G$ is called {\em small}. We say that $G$ is \emph{k-contractible} to $H$, with $k\leq n-1$, if $H$ can be obtained from $G$ by at most $k$ edge contractions. The next observation follows from the above.

\begin{observation}
\label{obs:big}
If a graph $G$ is $k$-contractible to a graph $H$, 
then $|V(G)|-k\leq |V(H)|$, and any $H$-witness structure ${\cal W}$ of $G$ satisfies the following three properties: no witness set of ${\cal W}$ contains more than $k+1$ vertices, ${\cal W}$ has at most $k$ big witness sets, and the big witness sets of ${\cal W}$ altogether contain at most $2k$ vertices.
\end{observation}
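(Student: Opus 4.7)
The plan is to exploit a single simple accounting identity: contracting a connected set of $s$ vertices into a single vertex requires exactly $s-1$ edge contractions. Combined with the definition of a witness structure, this links the total number of contractions to the sizes of the witness sets.

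First, I would fix an $H$-witness structure $\mathcal{W}$ witnessing that $G$ is $k$-contractible to $H$. By the characterization recalled in the excerpt, $G$ is $H$-contractible if and only if the graph obtained by contracting, for each $h \in V(H)$, every edge inside $G[W(h)]$ is isomorphic to $H$. Since each $W(h)$ is connected, reducing $G[W(h)]$ to a single vertex takes exactly $|W(h)| - 1$ contractions. Summing over all witness sets, the total number of contractions used by $\mathcal{W}$ equals
\[
\sum_{h \in V(H)} \bigl(|W(h)| - 1\bigr) \;=\; |V(G)| - |V(H)|.
\]
Since this quantity is at most $k$, the inequality $|V(G)| - k \le |V(H)|$ follows at once.

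Next, I would read off the three remaining properties from the same sum. For any single $h$, the summand $|W(h)| - 1$ is bounded by the full total, yielding $|W(h)| \le k+1$ and hence the first bound on witness-set size. For the number of big witness sets, note that each big $W(h)$ satisfies $|W(h)| - 1 \ge 1$, so the sum over big witness sets alone is at least their number; as the entire sum is at most $k$, there can be at most $k$ big witness sets. Finally, writing $B$ for the collection of big witness sets,
\[
\sum_{W(h) \in B} |W(h)| \;=\; \sum_{W(h) \in B} \bigl(|W(h)| - 1\bigr) + |B| \;\le\; k + k \;=\; 2k,
\]
which establishes the last assertion.

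There is no real obstacle here; the whole observation is a bookkeeping consequence of the fact that reducing a connected subgraph of order $s$ to a single vertex costs $s - 1$ contractions, together with the partition property of witness structures. The only point worth stating carefully in the write-up is why contractions inside different witness sets do not interfere, which follows immediately from condition (3) in the definition of an $H$-witness structure recalled above.
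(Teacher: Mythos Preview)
Your argument is correct and is exactly the bookkeeping the paper intends when it declares that the observation ``follows from the above''; no separate proof is given there. The only refinement for the write-up is that the claim concerns \emph{any} $H$-witness structure, so rather than fixing one ``witnessing $k$-contractibility'', note that every $H$-witness structure yields the same total $\sum_h (|W(h)|-1)=|V(G)|-|V(H)|\le k$, the last inequality holding because each edge contraction reduces the vertex count by exactly one.
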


A $2$-{\em coloring} of a graph $G$ is a function $\phi : V(G) \rightarrow \{1,2\}$. We point out that a 2-coloring of $G$ is merely an assignment of colors 1 and 2 to the vertices of $G$, and should therefore not be confused with a {\em proper} 2-coloring of $G$, which is a 2-coloring with the additional property that no two adjacent vertices receive the same color. If all the vertices belonging to a set $S \subseteq V(G)$ have been assigned the same color 
by $\phi$, we say that $S$ is {\it monochromatic} with respect to $\phi$, and we use $\phi(S)$ to denote the color of the
vertices of $S$. Any 2-coloring $\phi$ of $G$ defines a partition of $V(G)$ into two sets $V_\phi^1$ and $V_\phi^2$, which are the sets of vertices of $G$ colored 1 and 2 by $\phi$, respectively. A set $X\subseteq V(G)$ is a {\em monochromatic component} of $G$ with respect to $\phi$ if $G[X]$ is a connected component of $G[V_\phi^1]$ or a connected component of $G[V_\phi^2]$. We say that two different 2-colorings $\phi_1$ and $\phi_2$ of $G$ {\it coincide} on a vertex set $A \subseteq V(G)$ if $\phi_1(v) = \phi_2(v)$ for every vertex $v \in A$.

\section{\sc Path-Contractibility}
\label{sec:path}

Brouwer and Veldman~\cite{BV87} showed that it is NP-complete to decide whether a graph can be contracted to the path $P_\ell$, for every fixed $\ell\geq 4$. This, together with the observation that a graph $G$ is $k$-contractible to a path if and only if $G$ is contractible to $P_{n-k}$, implies that {\sc Path-Contractibility} is NP-complete. In this section, we first show that {\sc Path-Contractibility} has a linear vertex kernel. We then present an algorithm with running time $4^{k+O(\log^2 k)} + n^{O(1)}$ for the same problem. Throughout this section, whenever we mention a $P_\ell$-witness structure ${\cal W} = \{W_1, \ldots W_\ell\}$, it will be implicit that $P_\ell=p_1 \cdots p_\ell$, and $W_i=W(p_i)$ for every $i\in \{1,\dots, \ell\}$. We start by formulating a reduction rule, and then show that it is ``safe''.

\begin{redrule}
\label{rule:1}
Let $(G,k)$ be an instance of {\sc Path-Contractibility}. If $G$ contains a bridge $uv$ such that the deletion of edge $uv$ from $G$ results in two connected components that contain at least $k+2$ vertices each, then return $(G',k)$, where $G'$ is the graph resulting from the contraction of edge $uv$.
\end{redrule}

\begin{lemma}
\label{lem:rulesafe}
Let $(G',k)$ be an instance of {\sc Path-Contractibility} resulting from the application of Rule~\ref{rule:1} on $(G,k)$. Then $G'$ is $k$-contractible to a path if and only if $G$ is $k$-contractible to a path.
\end{lemma}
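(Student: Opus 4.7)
The plan is to prove both directions of the equivalence by transferring a witness structure on one side into a witness structure on the other. Throughout, let $w$ denote the vertex of $G'$ obtained from the contraction of $uv$, let $C_u, C_v$ be the two components of $G - uv$, and set $A := V(C_u) \setminus \{u\}$ and $B := V(C_v) \setminus \{v\}$. The hypothesis of the rule gives $|A|, |B| \geq k+1$, and because $uv$ is a bridge of $G$, the vertex $w$ is a cut vertex of $G'$: every edge of $G'$ joining a vertex of $A$ to one of $B$ must pass through $w$.

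For the forward direction, I would start from a $P_\ell$-witness structure $\{W_1, \dots, W_\ell\}$ of $G$ using at most $k$ contractions and distinguish two cases according to where $u$ and $v$ lie. If $u, v \in W_i$, replacing $W_i$ by $(W_i \setminus \{u, v\}) \cup \{w\}$ yields a $P_\ell$-witness structure of $G'$ using one fewer contraction. Otherwise $u \in W_i$ and $v \in W_j$ with $i \neq j$, and the edge $uv$ of $G$ forces $p_i p_j$ to be an edge of $P_\ell$, so $j = i \pm 1$; merging the two sets into $(W_i \cup W_j \setminus \{u,v\}) \cup \{w\}$ produces a $P_{\ell-1}$-witness structure of $G'$ with the same number of contractions. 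The connectedness and adjacency checks are straightforward and do not require the $k+2$ hypothesis.

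For the reverse direction, suppose $\{W_1, \dots, W_\ell\}$ is a $P_\ell$-witness structure of $G'$ using at most $k$ contractions, and let $W_i$ be the witness set containing $w$. My plan is to split $W_i$ into $W_i^u := (W_i \cap A) \cup \{u\}$ and $W_i^v := (W_i \cap B) \cup \{v\}$ and insert them consecutively in place of $W_i$ in the path, producing a $P_{\ell+1}$-witness structure of $G$ that uses exactly $|V(G)| - (\ell+1) = |V(G')| - \ell \leq k$ contractions. Connectedness of $W_i^u$ and $W_i^v$ in $G$ follows from the cut-vertex property: every connected component of $G'[W_i \cap A]$ must have a vertex adjacent to $w$ in $G'$ (otherwise $G'[W_i]$ would fail to be connected), and such a vertex is necessarily a neighbour of $u$ in $G$, which links its component to $u$ inside $G[W_i^u]$; the argument for $W_i^v$ is symmetric.

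The main obstacle, and the only step where the $k+2$ assumption enters, is ensuring that the insertion slots correctly between the existing witness sets. Each $W_j$ with $j \neq i$ is connected in $G'$ and does not contain $w$, so it lies entirely in $A$ or entirely in $B$; call these families $L$ and $R$. An edge of $P_\ell$ between an $L$-set and an $R$-set would require an edge of $G'$ from $A$ to $B$ not through $w$, which does not exist, so along $P_\ell$ the $L$-sets and $R$-sets form contiguous blocks whose only possible interface is $W_i$. For the construction to work I need both $L$ and $R$ to be non-empty, so that $W_i$ actually lies strictly between them. This is exactly what the size hypothesis guarantees: by Observation~\ref{obs:big} one has $|W_i| \leq k+1$, so $L = \emptyset$ would force $A \subseteq W_i$ and hence $|W_i| \geq |A| + 1 \geq k+2$, a contradiction; the case $R = \emptyset$ is symmetric. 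With $L$ and $R$ on opposite sides of $W_i$, placing $W_i^u, W_i^v$ in that order in place of $W_i$ yields the required $P_{\ell+1}$-witness structure, the boundary adjacencies with the last $L$-set and the first $R$-set being inherited from the adjacencies of $W_i$ in the original path.
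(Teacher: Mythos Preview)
Your proposal is correct and follows essentially the same approach as the paper's proof: both directions are handled by transferring witness structures, with the forward direction split into the same two cases (same bag versus adjacent bags), and the reverse direction handled by splitting the bag $W_i$ containing $w$ into its $A$-part with $u$ and its $B$-part with $v$, using Observation~\ref{obs:big} together with the $k{+}2$ hypothesis to ensure both sides of $W_i$ in the path are non-empty. Your write-up is in fact slightly more explicit than the paper's in verifying connectedness of the two split bags.
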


\begin{proof}
Let $G$ be a graph on which Rule 1 is applicable, and let $uv$ be the bridge of $G$ that is contracted to obtain $G'$.  Let $G_1$ and  $G_2$ be the two connected components that we obtain if we delete the edge $uv$ from $G$, with $L=V(G_1)$ and $R=V(G_2)$, such that $u \in L$ and $v \in R$. Furthermore, let $L' =L \setminus \{u\}$ and $R'= R \setminus \{v\}$, and let $w$ be the vertex of $G'$ resulting from the contraction of $uv$ in $G$.

Assume that $G$ is $k$-contractible to a path $P_\ell$, and let ${\cal W}=\{W_1, \ldots, W_\ell\}$ be a $P_\ell$-witness structure of $G$.  If $u$ and $v$ belong to the same witness set $W_i$ of ${\cal W}$, then we can obtain a $P_\ell$-witness structure ${\cal W'}$ of $G'$ by replacing $W_i$ with a new set $W_i'=(W_i\setminus\{u,v\})\cup\{w\}$, and keeping all other witness sets of ${\cal W}$ the same. Hence $G'$ can be contracted to $P_\ell$. Since $G$ is $k$-contractible to $P_\ell$, and $G'$ has one fewer vertex than $G$, we know that $G'$ is $(k-1)$-contractible to a path. If $u$ and $v$ belong to two different witness sets of ${\cal W}$, then one belongs to $W_i$ and the other to $W_{i+1}$, for two adjacent vertices $p_i$ and $p_{i+1}$ of $P_\ell$. Furthermore, $uv$ is the only edge in $G$ between a vertex of $A=\bigcup_{j=1}^{i} W_j$ and a vertex of $B=\bigcup_{j=i+1}^{\ell} W_j$, since both $G[A]$ and $G[B]$ are connected and $uv$ is a bridge of $G$. Consequently, by replacing $W_i$ and $W_{i+1}$ by one witness set $W_i''=((W_i \cup W_{i+1}) \setminus \{u,v\}) \cup \{w\}$, and keeping all other witness sets of ${\cal W}$ the same, we obtain a $P_{\ell -1}$-witness structure ${\cal W''}$ of $G'$.  Hence $G'$ is $k$-contractible to a path.

For the other direction, assume that $G'$ is $k$-contractible to a path $P_{\ell'}$, and let ${\cal W}=\{W_1, \ldots, W_{\ell'}\}$ be a $P_{\ell'}$-witness structure of $G'$. Let $W_i$ be the witness set of ${\cal W}$ containing $w$. The set $A = \bigcup_{j=1}^{i-1} W_j$ is connected in $G'{-}w = G{-}\{u,v\}$. Hence $A$ cannot contain vertices from both $L'$ and $R'$, so it contains only elements of $L'$ or only elements of $R'$.  Similarly, the set $B = \bigcup_{j=i+1}^{\ell} W_j$ contains only elements of $L'$, or only elements of $R'$. By Observation~\ref{obs:big}, the set $W_i$, which contains $w$, does not contain all of $L'$, as $|L'\cup\{w\}|\geq k+2$. Similarly, $W_i$ does not contain all of $R'$, as $|R'\cup\{w\}|\geq k+2$. Hence, neither $A$ nor $B$ is empty, one contains only elements of $R'$ and the other only elements of $L'$.  Consequently, by replacing $W_i$ by two new sets $(W_i\cap L')\cup \{u\}$ and $(W_i\cap R')\cup \{v\}$, and keeping all other witness sets of ${\cal W}$ the same, we obtain a $P_{\ell'+1}$-witness structure of $G$. Hence $G$ is $k$-contractible to a path.
\qed
\end{proof}

\begin{theorem}
\label{thm:path-linear-kernel}
{\sc Path-Contractibility} has a kernel with at most $5k + 3$ vertices.
\end{theorem}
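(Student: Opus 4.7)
The strategy is to apply Rule~\ref{rule:1} exhaustively and then argue that if the resulting graph $G^*$ has more than $5k+3$ vertices it must be a no-instance. Since each application of Rule~\ref{rule:1} reduces the vertex count by one and Lemma~\ref{lem:rulesafe} guarantees safety, at most $n$ applications suffice and the whole reduction runs in polynomial time. The task then reduces to the structural claim: if $G^*$ admits no further application of Rule~\ref{rule:1} and is $k$-contractible to a path, then $|V(G^*)|\le 5k+3$.

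To prove the structural claim, I would fix an arbitrary $P_\ell$-witness structure ${\cal W}=\{W_1,\ldots,W_\ell\}$ of $G^*$. By Observation~\ref{obs:big} at most $k$ witness sets are big and together they contain at most $2k$ vertices; let $s$ be the number of singleton witness sets, so that $|V(G^*)|\le s+2k$. The key observation is that whenever $W_i=\{u_i\}$ and $W_{i+1}=\{u_{i+1}\}$ are both singletons, the edge $u_iu_{i+1}$ is a bridge of $G^*$: the sets $A=W_1\cup\cdots\cup W_i$ and $B=W_{i+1}\cup\cdots\cup W_\ell$ are both connected (condition (1) and the consecutive edges along the path), and by condition (2) the only edges between $A$ and $B$ lie between $W_i$ and $W_{i+1}$, namely just $u_iu_{i+1}$.

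Let $I=\{\,i\in\{1,\ldots,\ell-1\}\mid W_i\text{ and }W_{i+1}\text{ are both singletons}\,\}$. Since Rule~\ref{rule:1} is inapplicable, each bridge arising as above has at least one side of size at most $k+1$. Split $I=I_L\cup I_R$ according to which side is small. Because every witness set has at least one vertex, $i\in I_L$ forces $i\le k+1$, and $i\in I_R$ forces $\ell-i\le k+1$; hence $|I|\le 2(k+1)=2k+2$. On the other hand, removing the at most $k$ big witness sets from the sequence $W_1,\ldots,W_\ell$ splits the singletons into at most $k+1$ maximal runs, and the number of consecutive singleton pairs across such runs is exactly $s$ minus the number of nonempty runs. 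Therefore $s\le |I|+(k+1)\le (2k+2)+(k+1)=3k+3$, giving $|V(G^*)|\le s+2k\le 5k+3$.

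The kernelization algorithm is then: apply Rule~\ref{rule:1} exhaustively; if the resulting graph has more than $5k+3$ vertices, return a trivial no-instance, and otherwise return the reduced instance. The main conceptual step is the bridge observation for consecutive singleton witness sets; the rest is a short double-counting argument separating the contribution of the (at most $2k$) vertices in big witness sets from the singleton vertices, whose count is controlled through $|I|$. I do not foresee a serious obstacle beyond carefully handling the boundary cases (runs of length zero, and $I_L\cap I_R$), which only make the bounds easier.
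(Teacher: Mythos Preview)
Your proposal is correct and follows essentially the same approach as the paper: both proofs hinge on the observation that two consecutive singleton witness sets yield a bridge, and that non-applicability of Rule~\ref{rule:1} forces the index of such a pair to lie within $k{+}1$ of one of the two ends of the path. The only difference is bookkeeping---the paper bounds $\ell$ directly (showing the middle block $W_{k+2},\ldots,W_{\ell-k-1}$ has no consecutive singletons, hence length at most $2k+1$, so $\ell\le 4k+3$ and $n\le \ell+k\le 5k+3$), whereas you bound the number $s$ of singletons via $|I|$ and the number of runs; both routes arrive at the same $5k+3$.
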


\begin{proof}
Let $(G,k)$ be an instance of {\sc Path-Contractibility}. Starting from $(G,k)$, we repeatedly test, in linear time, whether Rule~\ref{rule:1} can be applied on the instance under consideration, and apply the reduction rule if possible. Each application of Rule 1 strictly decreases the number of vertices. Hence, starting from $G$, we reach a \emph{reduced} graph on which Rule~\ref{rule:1} cannot be applied anymore in polynomial time. By Lemma~\ref{lem:rulesafe}, we know that the resulting reduced graph is $k$-contractible to a path if and only if $G$ is $k$-contractible to a path.

We now assume that $G$ is reduced. We show that if $G$ is $k$-contractible to a path, then $G$ has at most $5k+3$ vertices. Let $\mathcal{W} = \{W_1, \ldots, W_\ell\}$ be a $P_\ell$-witness structure of $G$ with $\ell \geq n-k$.  We first prove that $\ell \leq 4k+3$. Assume that $\ell \ge 2k+4$, and let $i$ be such that $k+2 \leq i \leq \ell-k-2$. Suppose, for contradiction, that both $W_i$ and $W_{i+1}$ are small witness sets, i.e., $W_i=\{u\}$ and $W_{i+1}=\{v\}$ for two vertices $u$ and $v$ of $G$. Then $uv$ forms a bridge in $G$ whose deletion results in two connected components. Each of these components contains at least all vertices from $W_1,\ldots,W_{k+2}$ or all vertices from $W_{\ell-k-1},\ldots, W_{\ell}$. Hence they contain at least $k+2$ vertices each. Consequently, Rule~\ref{rule:1} can be applied, contradicting the assumption that $G$ is reduced. So there are no consecutive small sets among $W_{k+2},\ldots,W_{\ell-k-1}$. By Observation~\ref{obs:big}, $\mathcal{W}$ contains at most $k$ big witness sets, so we have $(\ell-k-1)-(k+2)+1\leq 2k+1$ implying $\ell\leq 4k+3$. Since $n-k \leq \ell$ and $\ell \le 4k+3$, we conclude that $n \le 5k+3$.
\qed
\end{proof}

The existence of a kernel with at most $5k+3$ vertices immediately implies the following algorithm with running time $32^{k} \, n^{O(1)}$ for {\sc Path-Contractibility}. Generate all the different 2-colorings of the reduced graph $G$; there are at most $2^{5k+3}$ of those colorings. For each of these, check whether defining every monochromatic component as a witness set results in a $P_\ell$-witness structure of $G$, for $\ell \ge n-k$. This check can clearly be performed in polynomial time. It is easy to verify that the described algorithm will find a $P_\ell$-witness structure of $G$ if and only if $G$ is contractible to $P_\ell$, and the running time follows. The natural follow-up question is how much we can improve this running time. In this section we give an algorithm with running time $4^{k + O(\log^2 k)} + n^{O(1)}$. To that aim, we will first present a Monte Carlo algorithm with running time $4^{k + O(\log k)} + n^{O(1)}$, and then derandomize this algorithm at the cost of a slightly worse running time and exponential space.

Let $G$ be a graph, and let ${\cal W}=\{W_1, \ldots, W_\ell\}$ be a $P_\ell$-witness structure of $G$. We say that a $2$-coloring $\phi$ of $G$ is {\em compatible} with ${\cal W}$ (or $\mathcal{W}$-\emph{compatible}) if the following two conditions are both satisfied: {\bf (1)} every witness set of ${\cal W}$ is monochromatic with respect to $\phi$, and {\bf (2)} if $W_i$ and $W_j$ are big witness sets with $i < j$, and all witness sets $W_{i'}$ with $i < i' < j$ are small, then $\phi(W_i) \neq \phi(W_j)$. The randomized algorithm for {\sc Path-Contractibility}, presented in the proof of Theorem~\ref{thm:path-random} below, starts by generating a number of random 2-colorings of $G$. We show that, if $G$ is $k$-contractible to a path $P_\ell$, then at least one of these random 2-colorings is compatible with a $P_\ell$-witness structure of $G$, with relatively high probability. The next lemma shows that, if a 2-coloring $\phi$ is compatible with a $P_\ell$-witness structure of $G$, then we can compute a $P_{\ell'}$-witness structure of $G$, such that $\ell'\geq \ell$, in polynomial time. 

\begin{lemma}\label{lem:pathalg1} 
Let $\phi$ be a $2$-coloring of a graph $G$. If $\phi$ is compatible with a $P_\ell$-witness structure of $G$, then a $P_{\ell'}$-witness structure of $G$ can be computed in polynomial time, such that $\ell' \ge \ell$.
\end{lemma}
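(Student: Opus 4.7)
My plan is to describe an explicit polynomial-time algorithm that, from $\phi$ alone, constructs a $P_{\ell'}$-witness structure $\mathcal{W}'$ of $G$, and then argue that $\ell' \ge \ell$. Let $\mathcal{W} = \{W_1, \ldots, W_\ell\}$ denote a (hypothesized, unknown) $P_\ell$-witness structure with which $\phi$ is compatible.

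The algorithm begins by computing the monochromatic components $M_1, \ldots, M_s$ of $G$ under $\phi$ in polynomial time. The first key observation is structural: each $M_j$ equals a maximal union $W_{a_j} \cup \cdots \cup W_{b_j}$ of consecutive same-colored witness sets of $\mathcal{W}$. This follows from compatibility condition (1) (each $W_k$ is monochromatic) together with the fact that non-consecutive witness sets of $\mathcal{W}$ have no $G$-edges between them, so a maximal connected monochromatic subgraph of $G$ is exactly a maximal run of same-colored consecutive $W_k$'s. Compatibility condition~(2) then forces at most one $W_k$ with $a_j \le k \le b_j$ to be big; hence $G[M_j]$ decomposes as at most one ``central blob'' flanked by path-like chains of singletons on each side. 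The $M_j$'s inherit a linear ordering from the underlying path $P_\ell$, which the algorithm recovers from the adjacencies between monochromatic components in $G$.

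Next, the algorithm refines each $M_j$ into sub-witness-sets. The idea is to iteratively peel vertices off the boundary of $G[M_j]$ (the vertices with neighbors in $M_{j-1}$ or $M_{j+1}$, or, for the extremal components, the low-degree ends) inward, making each peeled vertex a singleton witness set. The peeling terminates when a central region remains that corresponds to a superset of the (at most one) big witness set of $\mathcal{W}|_{M_j}$, if any. Concatenating the resulting sub-witness-structures along the inherited linear order yields $\mathcal{W}'$. A counting argument then shows that each $M_j$ contributes at least $r_j := b_j - a_j + 1$ pieces, whence $\ell' = \sum_j \ell_j' \ge \sum_j r_j = \ell$.

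The main obstacle is the refinement step: the peeling procedure must be defined precisely and shown to produce a valid sub-witness-structure of the claimed size in every case. The principal difficulty is that the internal structure of the big witness set, combined with possible cross-edges to adjacent singletons, can create spurious cut vertices or 2-connected regions that confuse a naive peeling. I expect the cleanest way to handle this is via the block-cut tree of $G[M_j]$: one argues that this tree is a path whose two ``leaves'' correspond to the ends of the two singleton arms, and then converts the block-cut path into a witness decomposition in which every original singleton of $\mathcal{W}|_{M_j}$ becomes (or refines to) its own witness set, while the central blob is kept as one (possibly big) witness set.
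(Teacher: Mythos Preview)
Your overall plan matches the paper's: compute the monochromatic components $M_1,\ldots,M_s$, note that each is a union of consecutive witness sets of $\mathcal{W}$ containing at most one big one (so that $\{M_1,\ldots,M_s\}$ is already a $P_s$-witness structure of $G$), refine each $M_j$ locally into at least $r_j$ pieces, and concatenate. The gap is entirely in the refinement step.

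The claim that the block-cut tree of $G[M_j]$ is a path is false. Let $G[W_c]$ be a star with center $c_0$ and three leaves $c_1,c_2,c_3$; let the left arm meet $W_c$ only in $c_1$ and the right arm only in $c_2$. Then $G[M_j]$ is a tree in which $c_0$ has degree~$3$, so its block-cut tree also has a node of degree~$3$ and is not a path. More generally, nothing forbids cut vertices of high degree inside the big witness set, so the block-cut-path argument cannot be carried out. The naive peeling runs into exactly the difficulty you anticipate: if the last arm singleton $v_{c-1}$ has several neighbors in $W_c$, a leaf-peeling rule refuses to peel it and stops one short of the required $c-a$ peels on that side; any rule liberal enough to peel $v_{c-1}$ in that situation must then be shown never to create edges between non-consecutive singletons of the output, and that argument is missing from the proposal.

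The paper avoids structural arguments here and brute-forces instead. It calls a \emph{shatter} of $X_i$ any $P_t$-witness structure $\{Y_1,\ldots,Y_t\}$ of $G[X_i]$ with at most one big set and with $L\subseteq Y_1$, $R\subseteq Y_t$ (where $L,R$ are the vertices of $X_i$ having neighbors in $X_{i-1}$, resp.\ $X_{i+1}$), and searches for one whose big set is smallest. There are five cases according to whether the big set is absent, equals $X_i$, meets $L$, meets $R$, or meets neither; in the last case one simply tries every pair $(x,y)\in X_i^2$ as the two attachment vertices of the big set and checks in linear time whether $G[X_i\setminus\{x,y\}]$ splits into two induced path arms and one connected middle piece with the correct adjacencies. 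This takes $O(|X_i|^3)$ per component and needs no claim about block structure. Replacing your peeling/block-cut step by this enumeration makes the rest of your plan go through verbatim.
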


\begin{proof}
Suppose that $\phi$ is compatible with a $P_\ell$-witness structure ${\cal W}$ of $G$, with ${\cal W}=\{W_1, \ldots, W_\ell\}$. Let ${\cal X}=\{X_1,\ldots,X_r\}$ be the set of monochromatic components of $G$ with respect to $\phi$. Clearly, $r \le \ell$. By property (1) of a ${\cal W}$-compatible coloring, the big witness sets of $\mathcal W$ are monochromatic with respect to $\phi$. They are also connected by definition. Hence, for every big witness set $W_j\in {\mathcal W}$, there is an $i$ with $1\leq i \leq r$, such that $W_j\subseteq X_i$. Moreover, if a set $X_i$ contains several witness sets of ${\cal W}$, then these witness sets $W_{i_1}, \ldots, W_{i_q}$ are consecutive witness sets of ${\cal W}$, i.e., $p_{i_1} \cdots p_{i_q}$ is a subpath of $P_\ell$. Consequently, $\mathcal X$ is a $P_r$-witness structure of $G$. We can assume without loss of generality that the indexing of the witness sets of ${\cal X}$ respects the indexing of the witness sets of ${\cal W}$, i.e., if $W_j \subseteq  X_i$ and $W_{j'} \subseteq X_{i'}$ and $i<i'$, then $j<j'$. We now show how to decompose some of the witness sets of ${\cal X}$ into smaller witness sets, if $r < \ell$.

Let $i\in \{1,\ldots, r\}$. Let $1\leq a\leq b\leq \ell$, such that $X_i=\bigcup_{j=a}^{b} W_j$. By property (2) of a ${\cal W}$-compatible coloring, $X_i$ contains at most one big witness set of $\mathcal{W}$. In other words, $G[X_i]$ has a $P_{s}$-witness structure, with $s = b-a+1$, containing at most one big witness set. Define $X_0=X_{r+1}=\emptyset$, and let $L=N_G(X_{i-1})\cap X_i$ and $R=N_G(X_{i+1})\cap X_i$. Thus we have $L\subseteq W_{a}$ and $R\subseteq W_{b}$. A {\em shatter} of $X_i$ is a $P_{t}$-witness structure $\mathcal Y=\{Y_1,\ldots,Y_{t}\}$ of $G[X_i]$, for some $t \ge 1$, such that $\mathcal Y$ contains at most one big witness set, $L\subseteq Y_1$, and $R\subseteq Y_t$. The {\it size} of a shatter is the size of its biggest witness set. 
Note that $\{X_i\}$ and $\mathcal \{W_{a},\ldots,W_{b}\}$ are shatters of $X_i$, and that the size of the latter is smaller if $a\neq b$. In fact, any shatter ${\cal Y}$ of $X_i$ is of one of the following five types: (i) all witness sets of ${\cal Y}$ are small; (ii) ${\cal Y}$ contains a single witness set and this set is big; (iii) the only big witness set of ${\cal Y}$ has neighbors in $X_{i-1}$ but not in $X_{i+1}$; (iv) the only big witness set of ${\cal Y}$ has neighbors in $X_{i+1}$ but not in $X_{i-1}$; (v) the only big witness set of ${\cal Y}$ does not have neighbors in $X_{i+1}$ or $X_{i-1}$.

Recall that, given $\phi$, we do not know ${\cal W}$, but only ${\cal X}$. Let $X_i$ be a big witness set of ${\cal X}$. We will now explain how to find a minimum size shatter of $X_i$ in polynomial time. The idea is to check, for each of the five possible types of shatters listed above, if $X_i$ has a shatter of that type, and to compute a shatter of minimum size of that type if one exists. Comparing the sizes of each of those shatters clearly yields a minimum size shatter of $X_i$. For convenience, we assume that $i\in \{2,\ldots,r-1\}$, which means that at least one vertex of $X_i$ has a neighbor in $X_{i-1}$, and at least one vertex in $X_i$ has a neighbor in $X_{i+1}$. The arguments below can easily be adapted to work for the cases $i=1$ and $i=r$. Having a shatter of type (i) is equivalent to $G[X_i]$ being a path, such that the first vertex of the path is the only vertex of $X_i$ with a neighbor in $X_{i-1}$, and the last vertex of the path is the only vertex with a neighbor in $X_{i+1}$. If this case applies, which we can check in $O(|X_i|)$ time, then the unique shatter of $X_i$ of type (i) is the shatter all whose witness sets have size 1. Note that $X_i$ always has a shatter of type (ii), and that $\{X_i\}$ is the unique shatter of that type. Having a shatter of type (v) is equivalent to the existence of a set $B\subseteq X_i$, such that the following holds: $G[X_i \setminus B]$ has exactly two connected components $P_1$ and $P_2$ that are both paths, only the first vertex of $P_1$ has a neighbor in $X_{i-1}$, only the last vertex of $P_2$ has a neighbor in $X_{i+1}$, and only the last vertex $x$ of $P_1$ and the first vertex $y$ of $P_2$ have neighbors in $B$. Observe that in this case $G[X_i \setminus \{x,y\}]$ has exactly 3 connected components, and that the size of $B$ defines the size of a shatter of type (v). We can find out whether $X_i$ has a shatter of type (v), and if so, compute a minimum size shatter of that type, by trying all pairs of vertices in $X_i$ as $x$ and $y$. This can clearly be done in $O(|X_i|^3)$ time. We can find minimum size shatters of types (iii) and (iv), or conclude that they do not exist, in a similar way, and for them it is enough to examine the connected components of $G[X_i \setminus \{v\}]$ for every vertex $v \in X_i$. 

The above procedure can now be repeated for every big witness set $X_i\in {\cal X}$, and in total time $O(n^3)$ we can find a $P_{\ell'}$-witness structure of $G$ by combining the computed shatters. Since $G$ is $P_\ell$-contractible, $\phi$ is compatible with a $P_\ell$-witness structure of $G$, and we have computed a shatter of minimum size for each big witness set of ${\cal X}$,
we have that $\ell' \ge \ell$.
\qed
\end{proof}

\begin{theorem}
\label{thm:path-random}
There is a Monte Carlo algorithm with running time $4^{k + O(\log k)} + n^{O(1)}$ for {\sc Path-Contractibility}. If the input graph $G$ is not $k$-contractible to a path, then the algorithm correctly outputs {\sc no}. If $G$ is $k$-contractible to a path, then the algorithm correctly outputs {\sc yes} with probability at least $1-\frac{1}{e}$. Moreover, if it outputs {\sc yes}, it also outputs a $P_\ell$-witness structure of $G$ with $\ell \geq n-k$.
\end{theorem}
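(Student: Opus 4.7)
The plan is to combine the linear kernel of Theorem~\ref{thm:path-linear-kernel} with a random 2-coloring step that feeds Lemma~\ref{lem:pathalg1}. First I apply Theorem~\ref{thm:path-linear-kernel} to reduce $(G,k)$ in polynomial time to an equivalent instance $(G',k)$ with $n' := |V(G')| \leq 5k+3$; any $P_{\ell'}$-witness structure produced for $G'$ can be lifted back to one for $G$ by inverting the bridge contractions of Rule~\ref{rule:1}, so we may work on $G'$ throughout.

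The algorithm draws $T := \lceil 2^{2k-1} \rceil$ independent uniformly random 2-colorings $\phi_1, \ldots, \phi_T$ of $G'$, with each vertex independently assigned color $1$ or $2$ with probability $1/2$. For each $\phi_i$ it runs the polynomial-time procedure of Lemma~\ref{lem:pathalg1} and then checks, in polynomial time, whether the returned object is a valid $P_{\ell'}$-witness structure of $G'$ with $\ell' \geq n'-k$; if so, it outputs yes together with (the lifted version of) that structure. If no trial succeeds, it outputs no.

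The core of the argument is the probability bound. Suppose $G'$ is $k$-contractible to some $P_\ell$ with $\ell \geq n'-k$, and fix any $P_\ell$-witness structure $\mathcal W$ of $G'$. By Observation~\ref{obs:big}, $\mathcal W$ has at most $k$ big witness sets whose vertex sets contain a total of $b \leq 2k$ elements. Since every small witness set is a singleton, small sets automatically satisfy condition (1) of $\mathcal W$-compatibility and are untouched by condition (2); hence $\phi$ is $\mathcal W$-compatible iff every big set is monochromatic under $\phi$ and, reading the big sets in the order prescribed by $\mathcal W$, their colors alternate. Once we fix the color of the first big set (two choices), these conditions determine $\phi$ uniquely on the $b$ vertices in big sets, while the $n'-b$ singleton vertices are unconstrained, giving exactly $2^{n'-b+1}$ compatible colorings out of $2^{n'}$. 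Each trial is therefore $\mathcal W$-compatible with probability at least $2^{1-b} \geq 2^{1-2k}$, and the probability that none of the $T = 2^{2k-1}$ trials is compatible is at most $(1-2^{1-2k})^{2^{2k-1}} \leq e^{-1}$. On any compatible trial, Lemma~\ref{lem:pathalg1} returns a $P_{\ell'}$-witness structure with $\ell' \geq \ell \geq n'-k$, so the algorithm correctly answers yes with probability at least $1-1/e$.

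Correctness of no-answers is immediate: if $G'$ is not $k$-contractible to a path, then no $P_{\ell'}$-witness structure with $\ell' \geq n'-k$ exists, so no trial can produce one. Each trial costs $k^{O(1)}$ time (Lemma~\ref{lem:pathalg1} combined with $n' \leq 5k+3$), so the total running time is $T \cdot k^{O(1)} + n^{O(1)} = 2^{2k} \cdot 4^{O(\log k)} + n^{O(1)} = 4^{k+O(\log k)} + n^{O(1)}$. The main delicacy I would be careful about is pinning down exactly which events on the big sets must hold for $\mathcal W$-compatibility, so that the factor $2^{1-b}$ in the compatibility probability (and the matching $2^{2k}$ factor in the number of trials) is tight rather than overestimated; once that identification is in place, the probability bound and the running time both follow directly.
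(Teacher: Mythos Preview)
Your proof is correct and follows essentially the same approach as the paper: kernelize via Theorem~\ref{thm:path-linear-kernel}, repeatedly draw random 2-colorings, and invoke Lemma~\ref{lem:pathalg1} on each. Your probability analysis is in fact slightly sharper than the paper's---you observe that either choice of color for the first big set yields a compatible coloring, gaining a factor of~$2$ and allowing $2^{2k-1}$ trials instead of the paper's $4^k$---but this does not affect the stated asymptotics, and you also handle the lift from $G'$ back to $G$ more explicitly than the paper does.
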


\begin{proof}
We describe such a randomized algorithm. Let $G$ be an input graph. The algorithm has a main loop, which is repeated $4^k$ times. At each iteration of the loop, the algorithm generates a random $2$-coloring $\phi$ of $G$. The algorithm then tries to compute a $P_\ell$-witness structure of $G$ with $\ell\geq n-k$, which it will find, by Lemma~\ref{lem:pathalg1}, in polynomial time if $\phi$ is compatible with a $P_{\ell'}$-witness structure of $G$ for $\ell'\geq n-k$. If none of the $4^k$ iterations of the main loop yields a $P_\ell$-witness structure of $G$ with $\ell\geq n-k$, then we return {\sc no}. The running time of this algorithm is $ 4^k \, n^{O(1)}$. If, prior to the main loop, we compute a kernel of $G$ with at most $5k+3$ vertices in polynomial time according to Theorem~\ref{thm:path-linear-kernel}, then the running time is improved to $4^k \, k^{O(1)} + n^{O(1)} = 4^{k + O(\log k)} + n^{O(1)}$.

Suppose $G$ is $k$-contractible to a path $P_\ell$, and let ${\cal W}= \{W_1, \ldots, W_\ell\}$ be a $P_\ell$-witness structure of $G$. Let $W_{i_1}, \ldots, W_{i_q}$ be the big witness sets of ${\cal W}$, such that $i_1 \le \ldots \le i_q$. Let $\psi$ be a 2-coloring of $G$ such that $\psi(W_{i_1})=1$, $\psi(W_{i_2})=2$, $\psi(W_{i_3})=1$, and so on, and such that the vertices in the small witness sets are all colored $1$. Observe that $\psi$ is a ${\cal W}$-compatible 2-coloring of $G$. Furthermore, any 2-coloring which 
coincides with $\psi$ on all the vertices of the big witness sets of ${\cal W}$ is ${\cal W}$-compatible. By Observation~\ref{obs:big}, the total number of vertices in big witness sets of ${\cal W}$ is at most $2k$. Hence, the probability that a random $2$-coloring $\phi$ of $G$ coincides with $\psi$ on all the vertices of the big witness sets of ${\cal W}$ is at least $(\frac{1}{2})^{2k}=\frac{1}{4^{k}}$. 
A random 2-coloring of $G$ is generated at each iteration of the main loop, and the probability that none of these $4^k$ random 2-colorings is ${\cal W}$-compatible is at most $(1-\frac{1}{4^k})^{4^k} \leq \frac{1}{e}$.
\qed
\end{proof}

Using the following theorem on $(n,t)$-universal sets~\cite{NSS95}, we can turn our randomized algorithm into a deterministic algorithm. The {\it restriction} of a function $f: X \rightarrow Y$ to a set $S \subseteq X$ is the function $f_{|S}: S \rightarrow Y$ such that $f_{|S}(s)=f(s)$ for all $s \in S$. An $(n,t)$-{\it universal set} ${\cal F}$ is a set of functions from $\{1, 2, \ldots, n\}$ to $\{1,2\}$ such that, for every $S \subseteq \{1, 2, \ldots, n\}$ with $|S|=t$, the set ${\cal F}_{|S} = \{f_{|S} \mid f \in {\cal F}\}$ is equal to the set $2^S$ of all the functions from $S$ to $\{1,2\}$.

\begin{theorem}[\cite{NSS95}]
\label{thm:universal}
There is a deterministic algorithm that constructs an $(n,t)$-universal set ${\cal F}$ of size $2^{t+O(\log ^2 t)} \log n$ in time $2^{t+O(\log ^2 t)} \log n$.
\end{theorem}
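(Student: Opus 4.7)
The plan is to factor an $(n,t)$-universal set through a perfect hash family, sometimes called an $(n,t)$-\emph{splitter}: a family $\mathcal{H}$ of functions $h\colon\{1,\ldots,n\}\to\{1,\ldots,t\}$ such that every $t$-subset $S$ of the domain is mapped bijectively onto $\{1,\ldots,t\}$ by some $h\in\mathcal{H}$. Given such an $\mathcal{H}$, define
\[ \mathcal{F} \;=\; \{\, a\circ h : h\in\mathcal{H},\ a\colon\{1,\ldots,t\}\to\{1,2\} \,\}. \]
For any target restriction $\pi\colon S\to\{1,2\}$ on a $t$-set $S$, I would pick $h\in\mathcal{H}$ bijective on $S$ and set $a(i)=\pi(h^{-1}(i))$; then $(a\circ h)|_{S}=\pi$, so $\mathcal{F}$ is $(n,t)$-universal. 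Since $|\mathcal{F}|=|\mathcal{H}|\cdot 2^{t}$, the task reduces to building $\mathcal{H}$ of size $2^{O(\log^{2}t)}\log n$ within a matching time budget.

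I would construct $\mathcal{H}$ in two stages. The first stage collapses the domain from $[n]$ down to $[t^{c}]$ while remaining injective on every $t$-subset. This is achieved by an $\varepsilon$-almost $t$-wise independent family of hash functions into a range of size $\mathrm{poly}(t)$, with $\varepsilon$ small enough that for each fixed $t$-set a random function avoids all pairwise collisions with probability bounded away from zero; Naor--Naor style $\varepsilon$-biased sample spaces yield such a family of size $\mathrm{poly}(t)\cdot\log n$. The second stage then collapses the range from $[t^{c}]$ down to $[t]$ via a splitter on a domain of size $\mathrm{poly}(t)$; at this reduced scale one can afford a family of size $t^{O(\log t)}=2^{O(\log^{2}t)}$. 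Composing the two stages, one has $|\mathcal{H}|=2^{O(\log^{2}t)}\log n$, and the universal family $\mathcal{F}$ is produced by enumerating all pairs $(h,a)$. Each output function admits a short description and can be evaluated in polynomial time, so the construction time matches the output size up to polynomial factors.

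The main obstacle is the second stage: naively derandomizing the probabilistic existence proof for a splitter gives size $t^{\Theta(t)}$, already far above the target. To shrink this to $2^{O(\log^{2}t)}$ I would follow the recursive splitter construction of Naor, Schulman and Srinivasan: partition the range $[t]$ into two halves and use an almost pairwise independent hash family to distribute the elements of every $t$-subset roughly evenly between the halves, then recurse on each half. There are $O(\log t)$ levels of recursion, and each level contributes a multiplicative factor of $2^{O(\log t)}$, yielding the product $2^{O(\log^{2}t)}$. Working out the precise recursion with proper control of the splitting error across levels, and pinning down the explicit hash family used at each level, is the technical heart of the argument; for this inner step I would invoke the NSS95 splitter as a black box rather than reprove it from scratch, and focus the verification on the reduction in the first paragraph and the two-stage composition in the second.
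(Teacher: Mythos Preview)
The paper contains no proof of this theorem. It is quoted verbatim from Naor, Schulman and Srinivasan~\cite{NSS95} and used purely as a black box in the derandomization steps (Theorems~\ref{thm:path-notrandom} and~\ref{thm:treedeterministic}). There is therefore nothing in the paper to compare your proposal against.

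For what it is worth, your outline does track the actual NSS95 construction: compose an $(n,t)$-perfect hash family (their ``splitter'') with all $2^t$ Boolean assignments on $[t]$, and build the splitter by first hashing $[n]$ down to a $\mathrm{poly}(t)$-size domain via almost $t$-wise independent functions and then recursively splitting that small domain. One caveat: in your final paragraph you say you would ``invoke the NSS95 splitter as a black box'' for the inner recursive step. Since Theorem~\ref{thm:universal} \emph{is} the NSS95 result, this makes the proposal partially circular; a self-contained proof would need to carry out the $O(\log t)$-level recursion with explicit error control rather than defer to the very paper being cited. But given that the present paper treats the whole theorem as an imported tool, no proof is expected here.
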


\begin{theorem}
\label{thm:path-notrandom} 
{\sc Path-Contractibility} can be solved in time $4^{k+O(\log^2k)} + n^{O(1)}$.
\end{theorem}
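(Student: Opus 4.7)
The plan is to derandomize the Monte Carlo algorithm of Theorem~\ref{thm:path-random} by replacing the $4^k$ random 2-colorings with a deterministically constructed family of 2-colorings that is guaranteed to contain a $\mathcal{W}$-compatible one whenever $G$ is $k$-contractible to a path with witness structure $\mathcal{W}$. The natural tool is exactly the $(n,t)$-universal set supplied by Theorem~\ref{thm:universal}.

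First I would run the kernelization of Theorem~\ref{thm:path-linear-kernel} in polynomial time to reduce to an equivalent instance $(G',k)$ with $n' := |V(G')| \le 5k+3$. All subsequent work is done on $G'$. I would then invoke Theorem~\ref{thm:universal} with $n := n'$ and $t := 2k$ to construct, in time $2^{2k + O(\log^2 k)} \log n' = 4^{k + O(\log^2 k)}$, an $(n',2k)$-universal family $\mathcal{F}$ of the same size, interpreting each $f \in \mathcal{F}$ as a 2-coloring of $V(G')$. For every $f \in \mathcal{F}$ I would run the polynomial-time procedure from Lemma~\ref{lem:pathalg1}; if some $f$ yields a $P_\ell$-witness structure with $\ell \ge n - k$ then return \textsc{yes} together with that structure, otherwise return \textsc{no}.

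The correctness boils down to the following. Suppose $G'$ is $k$-contractible to $P_\ell$ via a witness structure $\mathcal{W}$, and let $\psi$ be the specific $\mathcal{W}$-compatible 2-coloring exhibited in the proof of Theorem~\ref{thm:path-random}. Let $B \subseteq V(G')$ be the union of the big witness sets of $\mathcal{W}$; by Observation~\ref{obs:big}, $|B| \le 2k$. Fix any superset $S \supseteq B$ of size exactly $2k$ (padding with arbitrary vertices of $V(G')$ if necessary, which is possible since $n' \ge 2k$ in the nontrivial case). By the defining property of an $(n',2k)$-universal set, the restrictions $\{f_{|S} \mid f \in \mathcal{F}\}$ exhaust all $2^{2k}$ functions $S \to \{1,2\}$; in particular, there exists $f^\star \in \mathcal{F}$ with $f^\star_{|B} = \psi_{|B}$. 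As already shown in the proof of Theorem~\ref{thm:path-random}, every 2-coloring that coincides with $\psi$ on $B$ is $\mathcal{W}$-compatible, so $f^\star$ is $\mathcal{W}$-compatible and Lemma~\ref{lem:pathalg1} applied to $f^\star$ produces a $P_{\ell'}$-witness structure of $G'$ with $\ell' \ge \ell \ge n' - k$.

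For the running time, building $\mathcal{F}$ costs $4^{k + O(\log^2 k)}$, each of the $|\mathcal{F}| = 4^{k + O(\log^2 k)}$ colorings is processed in $k^{O(1)}$ time by Lemma~\ref{lem:pathalg1}, and the kernelization contributes an additive $n^{O(1)}$, giving a total of $4^{k + O(\log^2 k)} + n^{O(1)}$. There is no real obstacle here: the only point that deserves explicit mention is the choice $t = 2k$ rather than $t = k$ in the universal set, which is forced by the bound on the total size of big witness sets in Observation~\ref{obs:big} and exactly mirrors the $(1/2)^{2k}$ success probability used in the randomized argument. Given that, everything else is a direct translation of Theorem~\ref{thm:path-random} through Theorem~\ref{thm:universal}.
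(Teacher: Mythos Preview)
Your proposal is correct and follows essentially the same approach as the paper: kernelize to at most $5k+3$ vertices, build a $(5k+3,2k)$-universal set via Theorem~\ref{thm:universal}, and run the procedure of Lemma~\ref{lem:pathalg1} on each coloring in the family. Your explicit padding argument to a set $S$ of size exactly $2k$ and the remark on why $t=2k$ (not $t=k$) is needed are nice clarifications that the paper leaves implicit; the only cosmetic slip is that the acceptance test should read $\ell \ge n'-k$ rather than $\ell \ge n-k$, since you are working on the kernel $G'$.
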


\begin{proof}
Given an instance $(G,k)$ of {\sc Path-Contractibility}, we construct an equivalent instance $(G',k)$ such that $G'$ has at most $5k+3$ vertices, which can be done in $n^{O(1)}$ time by Theorem~\ref{thm:path-linear-kernel}. We then construct a $(5k+3,2k)$-universal set ${\cal F}$, which can be done in time $2^{2k+O(\log^2 2k)} \log (5k+3)=4^{k+O(\log^2 k)}$ by Theorem~\ref{thm:universal}. Note that each function in ${\cal F}$ can be interpreted as a 2-coloring of $G'$. We now run the algorithm described in the proof of Theorem~\ref{thm:path-random}, but we let the main loop run over all 2-colorings $\phi \in {\cal F}$ instead of $4^k$ random 2-colorings. In each iteration, we perform the procedure described in the proof of Lemma~\ref{lem:pathalg1} on one of the 2-coloring in ${\cal F}$, which takes $k^{O(1)}$ time. The number of iterations of the main loop is the size of ${\cal F}$, which is $4^{k+O(\log^2k)}$ by Theorem~\ref{thm:universal}. Hence the total running time is $4^{k+O(\log^2k)} k^{O(1)} + n^{O(1)}= 4^{k+O(\log^2k)} + n^{O(1)}$.
\qed
\end{proof}

\section{\sc Tree-Contractibility}
\label{sec:tree}

Asano and Hirata~\cite{AH83} showed that {\sc Tree-Contractibility} is NP-complete. In this section, we first show that, unlike {\sc Path-Contractibility}, {\sc Tree Contractibility} does not have a polynomial kernel, unless coNP $\subseteq$ NP/poly. We then go on to present a $4.88^k n^{O(1)}$ time algorithm for {\sc Tree-Contractibility}.

A {\em polynomial parameter transformation} from a parameterized problem $Q_1$ to a parameterized problem $Q_2$ is a polynomial time reduction from $Q_1$ to $Q_2$ such that the parameter of the output instance is bounded by a polynomial in the parameter of the input instance. Bodlaender et al.~\cite{BTY09} proved that if $Q_1$ is NP-complete, $Q_2$ is in NP, there is a polynomial parameter transformation from $Q_1$ to $Q_2$, and $Q_2$ has a polynomial kernel, then $Q_1$ has a polynomial kernel.

\begin{theorem}
\label{no-poly-kernel}
{\sc Tree-Contractibility} does not have a kernel with size polynomial in $k$, unless coNP $\subseteq$ NP/poly.
\end{theorem}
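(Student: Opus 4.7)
The plan is to invoke the polynomial-parameter-transformation (PPT) framework of Bodlaender, Thomass\'e and Yeo that is stated in the paragraph preceding the theorem. Since \textsc{Tree-Contractibility} is easily in NP (guess a witness structure $\mathcal{W}$ with $|\mathcal{W}|\ge n-k$ and verify in polynomial time that every $W\in\mathcal{W}$ is connected and that $G/\mathcal{W}$ is acyclic), it suffices to exhibit an NP-complete parameterized problem $Q$ that has no polynomial kernel under coNP $\not\subseteq$ NP/poly, together with a polynomial-time reduction from $Q$ to \textsc{Tree-Contractibility} whose output parameter is bounded by a polynomial in the parameter of $Q$.

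For $Q$ I would take \textsc{Red-Blue Dominating Set} parameterized by $|B|+k$, or equivalently \textsc{Set Cover} parameterized by $|U|+k$; both are NP-complete and are known not to admit a polynomial kernel unless coNP $\subseteq$ NP/poly. Given an instance $(G=(R\cup B,E),k)$ of \textsc{Red-Blue Dominating Set}, the reduction should output a connected graph $H$ and a budget $k^\star$ polynomial in $|B|+k$ such that $H$ is $k^\star$-contractible to a tree if and only if $G$ has a red-blue dominating set of size at most $k$.

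The construction I have in mind uses three ingredients. First, a central ``root'' vertex $c$ to which a \emph{selector} vertex $s_r$ is attached for every $r\in R$, producing a star whose leaves represent the red vertices. Second, for every blue vertex $b\in B$, a private cycle gadget $C_b$ that is joined to $H$ only through the selectors $\{s_r:rb\in E(G)\}$, and is designed so that the unique cheap way to destroy the cycles $C_b$ contributes is to pull some selector $s_r$ with $r$ a neighbour of $b$ in $G$ into the witness set containing $c$. Third, a collection of long paths or other rigid tree-like decorations whose role is to restrict which vertices can afford to lie in big witness sets: by Observation~\ref{obs:big} only $O(k^\star)$ vertices can sit in big witness sets, so the decorations will force these to be precisely the selectors chosen to form a red-blue dominating set plus a constant number of internal vertices per gadget. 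Setting $k^\star = k + O(|B|)$ then gives a PPT.

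The main obstacle is the design of the cycle gadget $C_b$: it must be small enough to keep $k^\star$ polynomial in $|B|+k$, yet rigid enough that no combination of internal contractions, or contractions that merge two different gadgets, can substitute for the intended selector contraction. A plausible route is to take $C_b$ to be a short cycle attached to the selectors of $b$'s neighbours through degree-$2$ ``anchors'', so that a simple count (using the fact that each contraction lowers the cyclomatic number by at most one) forces at least one selector of a neighbour of $b$ to be merged with $c$. Once this rigidity is established, both directions of the correspondence follow directly: a red-blue dominating set $R'$ yields a witness structure merging $c$ with $\{s_r:r\in R'\}$ and performing the unique cheap gadget contractions, while any witness structure realising a tree within the budget must, by the rigidity argument, select at least one selector per blue vertex, which yields a red-blue dominating set. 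Applying the Bodlaender--Thomass\'e--Yeo theorem then finishes the proof.
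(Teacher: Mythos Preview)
Your high-level plan---a polynomial parameter transformation from \textsc{Red-Blue Dominating Set} into \textsc{Tree-Contractibility} and then an appeal to Bodlaender--Thomass\'e--Yeo---is exactly the framework the paper uses. The difficulty is that the heart of the argument, namely the design of the gadget $C_b$ and the ``rigid decorations'', is left open, and the mechanism you sketch does not obviously close it. A cyclomatic-number count only lower-bounds the \emph{number} of contractions needed to reach a tree; it says nothing about \emph{which} edges must be contracted, so it cannot by itself force a specific selector $s_r$ to merge with $c$. Moreover, your gadget $C_b$ is attached through all selectors $\{s_r : rb\in E(G)\}$, i.e.\ through $\deg_G(b)$ many vertices, a quantity not bounded in terms of the parameter $|B|+k$; any cycle structure that genuinely distinguishes these attachment points will tend to have cyclomatic number growing with $|E(G)|$ or $|R|$, and then $k^\star$ cannot stay polynomial in $|B|+k$. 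So as written there is a real gap: you would still have to invent a gadget whose rigidity argument is qualitatively different from a cycle count.

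The paper avoids this difficulty by reversing the roles. Rather than building a gadget per element of the dominated side that forces a selector choice, it forces \emph{all} dominated-side vertices into a single witness set and lets the \emph{connectivity} requirement of that witness set encode domination. Concretely, one adds a root $a$ adjacent to all of $B$, and for every $u\in A$ one adds $k+1$ new degree-two vertices, each adjacent only to $a$ and $u$. These create $k+1$ internally disjoint $a$--$u$ paths, so any $T$-witness structure using at most $k$ contractions must place $a$ and every $u\in A$ in the same witness set $W$. Since $A\cup\{a\}$ is independent and $W$ must be connected, $W$ has to contain vertices of $B$ that dominate $A$; the size bound $|W|\le k+1$ with $k=|A|+t$ then yields a dominating set of size at most $t$. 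There are no per-element cycle gadgets on the dominating side and no cyclomatic counting---the entire rigidity comes from the pigeonhole ``$k+1$ disjoint cycles versus $k$ contractions'' argument applied locally to each $u$. This is both simpler than your sketch and sidesteps the dependence on $|E(G)|$ that makes your $C_b$ hard to design.
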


\begin{proof}
We give a polynomial parameter transformation from {\sc Red-Blue Domination} to {\sc Tree-Contractibility}. {\sc Red-Blue Domination} takes as input a bipartite graph $G=(A,B,E)$ and an integer $t$, and the question is whether there exists a subset of at most $t$ vertices in $B$ that dominates $A$. We may assume that every vertex of $A$ has a neighbor in $B$, and that $t \le |A|$. This problem, when parameterized by $|A|$, has been shown not to have a polynomial kernel, unless coNP $\subseteq$ NP/poly~\cite{DLS09}. Since {\sc Tree-Contractibility} is in NP, the existence of the polynomial parameter transformation described below implies that {\sc Tree-Contractibility} does not have a kernel with size polynomial in $k$, unless coNP $\subseteq$ NP/poly.

Given an instance of {\sc Red-Blue Domination}, that is a bipartite graph $G=(A,B,E)$ and an integer $t$, we construct an instance $(G',k)$ of {\sc Tree-Contractibility} with $G'=(A', B', E')$ as follows. To construct $G'$, we first add a new vertex $a$ to $A$ and make it adjacent to every vertex of $B$. We define $A'=A \cup \{a\}$. We then add, for every vertex $u$ of $A$, $k+1$ new vertices to $B$ that are all made adjacent to exactly $u$ and $a$. The set $B'$ consists of the set $B$ and the $|A|(k+1)$ newly added vertices. Finally, we set $k=|A|+t$. This completes the construction. Observe that $k \le 2|A|$, which means that the construction is parameter preserving. In particular, we added $|A|(k+1)+1 \leq 2|A|^2 + |A| +1$ vertices to $G$ to obtain $G'$, and we added $|B|$ edges incident to $a$ and then two edges incident to each vertices of $B'\setminus B$. Hence the size of the graph has grown by $O(|B|+|A|^2)$. We now show that that there is a subset of at most $t$ vertices in $B$ that dominates $A$ in $G$ if and only if $G'$ is $k$-contractible to a tree.

Assume there exists a set $S\subseteq B$ of size at most $t$ such that $S$ dominates $A$ in $G$. Vertex $a$ is adjacent to all vertices of $S$, so the set $X=\{a\}\cup S\cup A$ is connected in $G'$. Note that all the vertices of $G'$ that do not belong to $X$ form an independent set in $G$. Consider the unique witness structure of $G'$ that has $X$ as its only big witness set. Contracting all the edges of a spanning tree of $G[X]$ yields a star. Since $X$ has at most $1+t+|A|=1+k$ vertices, any spanning tree of $G[X]$ has at most $k$ edges. Hence $G'$ is $k$-contractible to a tree.

For the reverse direction, assume that $G'$ is $k$-contractible to a tree $T$, and let ${\cal W}$ be a $T$-witness structure of $G'$. Vertex $a$ is involved in $k+1$ different cycles with each vertex of $A$ through the vertices of $B' \setminus B$. Hence, if $a$ and a vertex $u$ of $A$ appear in different witness sets, we need more than $k$ contractions to kill the $k+1$ cycles containing both $a$ and $u$. Consequently, there must be a witness set $W\in {\cal W}$ that contains all the vertices of $A \cup \{a\}$. Since all the vertices of $G' - W$ belong to $B'$, they form an independent set in $G'$. This means that $W$ is the only big witness set of ${\cal W}$, and $T$ is in fact a star. Since $G'$ is $k$-contractible to $T$, we know that $|W|\leq k+1$ by Observation~\ref{obs:big}. Suppose $W$ contains a vertex $x\in B' \setminus B$. By construction, $x$ is adjacent only to $a$ and exactly one vertex $a'\in A$. Let $b'$ be a neighbor of $a'$ in $B$. Then we have $N_{G'}(x)\subseteq N_{G'}(b')$, so $W'=(W\setminus \{x\})\cup \{b'\}$ is connected and $|W'| \leq |W|$. The unique witness structure of $G'$ that has $W'$ as its only big witness set shows that $G'$ can be $k$-contracted to a tree $T'$ on 
at least as many vertices as $T$. Thus we may assume that $W$ contains no vertices of $B' \setminus B$. Let $S=W\setminus A'$. The set $W$ is connected and $A'$ is an independent set, so $S$ dominates $A'$. Moreover $|S|=|W|-|A|-1\leq k -|A| =t$. We conclude that $S$ is a subset of at most $t$ vertices in $B$ that dominates $A$ in $G$.
\qed
\end{proof}

Although {\sc Tree-Contractibility} does not have a polynomial kernel, we are able to give an algorithm for the problem with running time $4.88^{k} n^{O(1)}$. As with {\sc Path-Contractibility}, we first give a randomized algorithm with running time $4.88^k n^{O(1)}$ below. We then derandomize it to obtain a deterministic algorithm with the same complexity. The following lemma implies that we may assume the input graph to be 2-connected.

\begin{lemma}
\label{lem:2conn}
A graph is $k$-contractible to a tree if and only if each of its 2-connected components can be contracted to a tree, using at most $k$ edge contractions in total.
\end{lemma}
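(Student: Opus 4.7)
The plan is to analyze both directions by relating witness structures of $G$ to witness structures of its 2-connected components $B_1, \ldots, B_s$ via the block-cut tree $T_G$ of $G$.

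For the ``if'' direction, I would assume each $B_i$ is $k_i$-contractible to a tree $T_i$ via a witness structure ${\cal W}_i$, with $\sum_i k_i \le k$. Taking the equivalence closure of the relation ``$u,v$ lie in the same part of some ${\cal W}_i$'' yields a partition of $V(G)$ whose classes are connected in $G$, since within-block parts are connected and parts from different blocks merge only at shared cut vertices. The graph obtained by contracting these classes is obtained by gluing the trees $T_i$ at the images of the cut vertices they share; because the block-cut tree is itself a tree, this glued graph is a tree $T$. A direct count based on the fact that each cut vertex $c$ is over-counted $\deg_{T_G}(c)-1$ times in both $\sum_i |V(B_i)|$ and $\sum_i |V(T_i)|$ shows that the number of contractions needed to obtain $T$ from $G$ is exactly $\sum_i k_i \le k$.

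For the ``only if'' direction, let ${\cal W}$ be a $T$-witness structure of $G$ with $|V(G)| - |V(T)| \le k$, and set ${\cal W}_i = \{W \cap V(B_i) : W \in {\cal W},\, W \cap V(B_i) \neq \emptyset\}$ for each block $B_i$. The central claim is that each $W \cap V(B_i)$ is connected in $G[V(B_i)]$. To prove this, take $u, v \in W \cap V(B_i)$ and a $u$-$v$ path $P$ inside $G[W]$; whenever $P$ exits $B_i$ through a cut vertex $c$, the subtrees of $T_G$ attached to $B_i$ at different cut vertices are pairwise separated by $B_i$, so $P$ can only return to $B_i$ through $c$ itself. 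Shortcutting each such excursion yields a $u$-$v$ walk within $G[W \cap V(B_i)]$. It follows that ${\cal W}_i$ contracts $G[V(B_i)]$ to a connected graph $H_i$, and $H_i$ must be a tree, since any cycle in $H_i$ would lift to a cycle in the tree $T$.

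What remains is the counting bound $\sum_i (|V(B_i)| - |V(H_i)|) \le k$. For this I would prove the identity $r(W) - 1 = \sum_{c \in W \cap C}(\deg_{T_G}(c) - 1)$ for each witness set $W$, where $r(W)$ is the number of blocks meeting $W$ and $C$ denotes the cut vertices of $G$. This identity follows because the blocks intersecting a connected $W$, together with the cut vertices in $W$, induce a subtree of the block-cut tree: any cut vertex on the unique block-cut-tree path between two blocks touching $W$ must itself belong to $W$, since $W$ is connected in $G$. Summing this identity over $W$ (each cut vertex lying in a unique witness set), and combining with the elementary count $\sum_i |V(B_i)| = |V(G)| + \sum_c (\deg_{T_G}(c) - 1)$, collapses the total to $\sum_i (|V(B_i)| - |V(H_i)|) = |V(G)| - |V(T)| \le k$. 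The main obstacle, as I expect it, is precisely this counting identity: it is tempting to attempt a loose inequality, but the sharp argument needs the block-cut-tree subtree structure to ensure that contractions are neither lost nor double-counted when one restricts from $G$ to its blocks.
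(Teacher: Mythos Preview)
Your proof is correct, but it takes a genuinely different route from the paper's. The paper argues inductively on a \emph{single} cut vertex $v$: it splits $G$ into the graphs $G_i = G[V_i \cup \{v\}]$, where $V_1,\ldots,V_p$ are the components of $G-v$, and shows that $G$ is $k$-contractible to a tree if and only if the $G_i$ are, using at most $k$ contractions in total. The point is that in both directions one can use literally the same set $E'$ of contracted edges (just partitioned among the $G_i$), so no counting argument is needed at all; iterating over all cut vertices then gives the statement for $2$-connected components.

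Your global approach via the block--cut tree trades the induction for a one-shot argument, at the price of the bookkeeping you correctly flagged as the main obstacle: the identity $r(W)-1=\sum_{c\in W\cap C}(\deg_{T_G}(c)-1)$ and the verification that $\{W\cap V(B_i)\}$ really is a tree-witness structure of $B_i$. Your connectivity argument (shortcutting excursions through cut vertices) and your subtree argument for the identity are both sound; note in particular that once $c\in W$, every block containing $c$ meets $W$ through $c$ itself, so the degree of $c$ in the induced subtree of $T_G$ equals its full degree $\deg_{T_G}(c)$, which is exactly what makes the identity work. The upshot is that your proof is more structural and yields the exact equality $\sum_i(|V(B_i)|-|V(H_i)|)=|V(G)|-|V(T)|$ in one stroke, while the paper's proof is shorter and more elementary because the recursion on one cut vertex sidesteps the block--cut tree entirely.
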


\begin{proof}
Let $G$ be a graph, and suppose that $G$ has a vertex $v$ whose deletion results in at least two connected components with vertex sets $V_1,\ldots,V_p$. For every $i\in \{1,\ldots,p\}$, let $G_i=G[V_i\cup \{v\}]$. We prove that $G$ is $k$-contractible to a tree if and only if each of the graphs $G_i$ is contractible to a tree, using at most $k$ edge contractions in total. Repeating this argument for each of the cut vertices of $G$ yields the lemma.

Suppose each of the graphs $G_i$ can be contracted to a tree $T_i$, and that at most $k$ edge contractions are used in total; 
let $E'\subseteq E(G)$ be the corresponding set of contracted edges, with $|E'|\leq k$. For every $i\in \{1,\ldots,p\}$, let $\mathcal W_i$ be a $T_i$-witness structure of the graph $G_i$, and let $W_i$ be the witness set of $\mathcal W_i$ containing $v$. Note that $W=\bigcup_{i=1}^p W_i$ is a connected set in $G$. We define a witness structure ${\cal W}$ of $G$ as follows: ${\cal W}$ contains all the witness sets of ${\cal W}_i\setminus W_i$ for every $i\in \{1,\ldots,p\}$, as well as one witness set formed by the set $W$. It is clear that ${\cal W}$ is a $T$-witness structure of $G$ for some tree $T$, and that contracting each of the edges of $E'$ in $G$ yields $T$. Since $|E'|\leq k$, $G$ is $k$-contractible to a tree.

To prove the reverse statement, suppose $G$ is $k$-contractible to a tree $T$. Let ${\cal W}$ be a $T$-witness structure of $G$, and let $W$ be the witness set of ${\cal W}$ containing $v$. Since $v$ is a cut vertex of $G$, every witness set in ${\cal W}\setminus \{W\}$ is contained in exactly one of the sets $V_i$. For every $i\in \{1,\ldots,p\}$, we define a witness structure ${\cal W}_i$ of the graph $G_i$ as follows: ${\cal W}_i$ contains every witness set of ${\cal W}$ that is contained in $V_i$, plus one witness set $W_i=(W\cap V_i)\cup \{v\}$. Since $v$ is a cut vertex of $G$ and the set $W$ is connected, each of the sets $W\cap V_i$ is connected as well. 
We conclude that each graph $G_i$ can be contracted to a tree $T_i$ by repeatedly contracting every edge that has both endpoints in the same witness set of ${\cal W}_i$. Since contracting exactly the same edges in $G$ yields the tree $T$ and $G$ is $k$-contractible to $T$, the total number of edge contractions needed to contract each $G_i$ to $T_i$ is at most $k$.
\qed
\end{proof}

The main idea for the $4.88^k n^{O(1)}$ randomized algorithm for {\sc Tree-Contract\-ibility} is similar to the algorithm for {\sc Path-Contractibility} that was presented in Section~\ref{sec:path}. We will again use $2$-colorings of the input graph, but we adapt the notion of compatibility. Let $T$ be a tree, and let ${\cal W}$ be a $T$-witness structure of a graph $G$. We say that a 2-coloring $\phi$ of $G$ is {\em compatible} with ${\cal W}$ (or ${\cal W}$-{\em compatible}) if the following two conditions are both satisfied: {\bf (1)} every witness set of ${\cal W}$ is monochromatic with respect to $\phi$, and {\bf (2)} if $W(u)$ and $W(v)$ are big witness sets and $uv\in E(T)$, then $\phi(W(u)) \neq \phi(W(v))$.

Let $\phi$ be a given 2-coloring of a 2-connected graph $G$. In Lemma~\ref{lem:treealg1}, we will show that if $\phi$ is ${\cal W}$-compatible, then we can use the monochromatic components of $G$ with respect to $\phi$ to compute a $T'$-witness structure of $G$, such that $T'$ is a tree with at least as many vertices as $T$. Informally, we do this by finding a ``star-like'' partition of each monochromatic component $M$ of $G$, where one set of the partition is a connected vertex cover of $G[M]$, and all the other sets have size 1. A {\em connected vertex cover} of a graph $G$ is a subset $V'\subseteq V(G)$ such that $G[V']$ is connected and every edge of $G$ has at least one endpoint in $V'$. While it is NP-hard to find a connected vertex cover of minimum size in a given graph~\cite{GJ79}, there is a fast fixed parameter tractable algorithm for the problem.

\begin{proposition}[\cite{BF10}]
\label{prp:cvc}
Given a graph $G$, it can be decided in $2.44^t n^{O(1)}$ time whether $G$ has a connected vertex cover of size at most $t$. If such a connected vertex cover exists, then it can be computed within the same time.
\end{proposition}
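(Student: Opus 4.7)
\smallskip
\noindent\emph{Proof plan.} The plan is to design a bounded search-tree algorithm that grows a connected vertex cover $S$ incrementally, maintaining the invariant that $G[S]$ is connected, and then to analyse it by a measure-and-conquer argument. First I would reduce to the case where $G$ is connected and has no isolated vertex: isolated vertices make the answer negative unless the only edge of $G$ is empty, and a disconnected graph has no connected vertex cover unless it has at most one non-trivial component. I would then guess a vertex $v_0$ of an optimal CVC by trying all $n$ choices, absorbing the linear overhead into the $n^{O(1)}$ factor, and initialise $S=\{v_0\}$ together with an empty set $D$ of vertices declared to be outside the cover.

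At a generic node of the search tree I pick a branching witness of one of two kinds: either a vertex $v \in N_G(S)\setminus D$ that is still undecided, or, if every vertex adjacent to $S$ is already classified, a not-yet-covered edge $xy$ together with the closest undecided neighbour of $S$ on a shortest $S$-to-$\{x,y\}$ path. The two branches put $v$ into $S$, paying one unit of the budget $t$, or into $D$, which forces all edges at $v$ to be covered from the other side. Between branching steps I apply standard reductions: a still-undecided vertex with all incident edges already covered by $S$ is moved into $D$; a degree-one vertex in $V\setminus(S\cup D)$ triggers forced moves on its unique neighbour; known folding and domination rules are used to eliminate degree-one and degree-two configurations cheaply. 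Correctness of the connectivity invariant is immediate because only neighbours of $S$ are ever added to $S$, and correctness of the vertex-cover invariant is immediate because every uncovered edge eventually triggers a branch.

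The hard part will be pushing the branching factor down from the trivial $2$ to $2.44$. A uniform measure $t$ does not suffice because the $D$-branch does not directly spend budget and must instead be paid for through structural consequences, for example by forcing several neighbours of $v$ into $S$ or by strictly decreasing the residual degrees of nearby vertices. The plan is therefore to define a weighted measure that assigns each undecided vertex a weight depending on its current degree and on whether it is adjacent to $S$, to verify that every reduction and branching rule decreases this measure by a controlled amount, and to check, by case analysis on all vertex configurations of small degree, that every branching vector solves to a recurrence of base at most $2.44$. The algorithm returns the connected vertex cover along the successful search-tree path, giving the constructive part of the statement within the same running time bound.
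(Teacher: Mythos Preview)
The paper does not prove this proposition; it is quoted verbatim from Binkele-Raible and Fernau~\cite{BF10} and used as a black box inside the proof of Lemma~\ref{lem:treealg1}. There is therefore no in-paper argument to compare your proposal against.

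That said, your outline is well aligned with what the cited source actually does: \cite{BF10} also grows a connected partial solution by branching on vertices adjacent to the current set and analyses the resulting search tree by a measure-and-conquer argument with degree-dependent weights. Two remarks on your write-up. First, the phrase ``pushing the branching factor down from the trivial $2$ to $2.44$'' is backwards, since $2.44>2$; the issue is rather that the $D$-branch does not decrement the budget $t$ at all, so without a refined measure you have no single-exponential bound, and the weighted potential is what brings the base to $2.4882$. Second, everything you defer (``the hard part,'' the choice of weights, the case analysis on small-degree configurations) is precisely where the constant $2.44$ is earned in~\cite{BF10}; as written, your plan is a reasonable blueprint but does not yet certify that particular base.
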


\begin{lemma}
\label{lem:treealg1}
Let $\phi$ be a $2$-coloring of a 2-connected graph $G$. If $\phi$ is compatible with a $T$-witness structure of $G$ whose largest witness set has size $d$, where $T$ is a tree, then a $T'$-witness structure of $G$ can be computed in time $2.44^d \, n^{O(1)}$, such that $T'$ is a tree with as at least as many vertices as $T$.
\end{lemma}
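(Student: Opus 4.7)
The plan is to exploit the 2-connectedness of $G$ to show that each monochromatic component of $\phi$ contains at most one big witness set of $\mathcal{W}$, and then to reduce the task on each monochromatic component to finding a small connected vertex cover via Proposition~\ref{prp:cvc}.

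First I would establish the following structural fact. In a 2-connected graph every singleton witness set of $\mathcal{W}$ must be a leaf of $T$, for otherwise its single vertex would be a cut vertex of $G$. Consequently, in the subtree $T'\subseteq T$ spanned by the nodes whose witness sets lie inside a fixed monochromatic component $M$, every internal node of $T'$ is also internal in $T$ and therefore big; combined with compatibility condition~(2), which forbids adjacent big witness sets from sharing a color, this forces the internal nodes of $T'$ to form an independent set, making $T'$ a star (possibly degenerated to a single node or edge). Thus $M$ contains at most one big witness set $B(M)$, the remaining witness sets of $\mathcal{W}$ contained in $M$ are singletons $\{v_1\},\dots,\{v_q\}$, and $B(M)$ (when it exists) is a connected vertex cover of $G[M]$ of size at most $d$. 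Since each $v_i$ is a leaf of $T$ whose only $T$-neighbor is the center of the star, every $G$-neighbor of $v_i$ lies in $B(M)\subseteq M$; hence the \emph{frontier} $F(M):=\{v\in M:N_G(v)\not\subseteq M\}$ is contained in $B(M)$.

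The algorithm processes each monochromatic component $M$ separately, computing a connected vertex cover $Y_1(M)$ of $G[M]$ of minimum size subject to $F(M)\subseteq Y_1(M)$. To respect the running-time budget, I would form an auxiliary graph from $G[M]$ by attaching two degree-one pendants to every vertex of $F(M)$: any connected vertex cover of this auxiliary graph must contain $F(M)$ (otherwise an excluded vertex's pendants would be isolated in the cover), while $B(M)$ remains a connected vertex cover of the auxiliary graph, so Proposition~\ref{prp:cvc} applied with parameter $d$ returns the desired $Y_1(M)$ in $2.44^d\,n^{O(1)}$ time. The algorithm outputs the partition $\mathcal{W}' = \bigcup_M \bigl(\{Y_1(M)\}\cup\{\{u\}:u\in M\setminus Y_1(M)\}\bigr)$.

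It remains to verify that $\mathcal{W}'$ is a witness structure of a tree $T'$ with $|V(T')|\geq|V(T)|$. Every part is connected by construction. For the quotient $Q$ of $\mathcal{W}'$ to be acyclic, the inclusion $F(M)\subseteq Y_1(M)$ is crucial: it guarantees that every singleton $\{u\}$ of $\mathcal{W}'$ has all of its $G$-neighbors inside the corresponding $Y_1(M)$ and hence appears as a pendant leaf of $Y_1(M)$ in $Q$; the remaining ``spine'' on the nodes $\{Y_1(M)\}_M$ is isomorphic as a simple graph to the tree obtained from $T$ by contracting each monochromatic subtree to a single node, and is therefore itself a tree. A direct count per component gives $|V(T')|-|V(T)|=\sum_M(|B(M)|-|Y_1(M)|)\geq 0$, settling the inequality. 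The main obstacle is computing $Y_1(M)$ under the constraint $F(M)\subseteq Y_1(M)$ within the $2.44^d$ budget: a plain minimum CVC returned by Proposition~\ref{prp:cvc} may omit vertices of $F(M)$ and yield a non-tree quotient (as one already sees on a $4$-cycle split into two monochromatic edges), and it is the pendant construction above that enforces the constraint without inflating the running time.
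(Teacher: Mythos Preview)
Your proposal is correct and follows essentially the same approach as the paper: both exploit 2-connectedness to show small witness sets correspond to leaves of $T$, deduce that each monochromatic component carries at most one big witness set arranged as the center of a star, and then replace that big set by a minimum connected vertex cover containing the frontier, found via Proposition~\ref{prp:cvc} on an auxiliary graph with pendants attached to frontier vertices. The only cosmetic differences are that the paper attaches a single pendant per frontier vertex (arguing that a minimum connected vertex cover avoids degree-one vertices) whereas you attach two, and that the paper explicitly iterates the parameter $t$ from $1$ to $d$ to obtain a \emph{minimum} cover, which you should also do so that $|Y_1(M)|\le |B(M)|$ holds for your counting argument.
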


\begin{proof}
Suppose $\phi$ is compatible with a $T$-witness structure ${\cal W}$ of $G$, such that $T$ is a tree, and the largest witness set of ${\cal W}$ has size $d$. We start by making a simple observation about the witness sets of ${\cal W}$. Suppose $W(v)$ is a witness set of ${\cal W}$ for some vertex $v\in V(T)$ that has degree at least 2 in $T$. Then $W(v)$ must be big, since otherwise the single vertex of $W(v)$ would be a cut vertex of $G$, contradicting the assumption that $G$ is 2-connected. This implies that if a witness set $W(v')\in {\cal W}$ is small, then $v'$ is a leaf of $T$. 

Let ${\cal X}$ be the set of monochromatic components of $G$ with respect to $\phi$. Every witness set of ${\cal W}$ is monochromatic by property (1) of a ${\cal W}$-compatible 2-coloring, and connected by definition. Hence, for every $W\in {\cal W}$, there exists an $X\in {\cal X}$ such that $W\subseteq X$. Moreover, since every $X\in {\cal X}$ is connected, there exists a vertex subset $Y\subseteq V(T)$ such that $T[Y]$ is a connected subtree of $T$ and $X=\bigcup_{y\in Y} W(y)$. Hence, ${\cal X}$ is a $T''$-witness structure of $G$ for a tree $T''$ that has at most as many vertices as $T$. We now show how to partition the big witness sets of ${\cal X}$ in such a way, that we obtain a $T'$-witness structure of $G$ for some tree $T'$ on at least as many vertices as $T$.

Suppose there exists a set $X\in {\cal X}$ that contains more than one witness set of ${\cal W}$, say $W(v_1),\ldots,W(v_p)$ for some $p\geq 2$. As a result of the observation we made earlier and properties (1) and (2) of a ${\cal W}$-compatible 2-coloring, we know that at most one of those sets can be big. If all the sets $W(v_1),\ldots,W(v_p)$ are small, then all the vertices $v_1,\ldots,v_p$ are leaves in $T$. This means that $p=2$ and $T$ consists of only two vertices; a trivial case. Suppose one of the sets, say $W(p_1)$, is big. Since each of the sets $W(v_2),\ldots,W(v_p)$ is small, the vertices $v_2,\ldots,v_p$ are leaves in $T$. This means that the vertices $v_1,\ldots,v_p$ induce a star in $T$, with center $v_1$ and leaves $v_2,\ldots,v_p$. Note that $W(v_1)$ is a connected vertex cover in the graph $G[X]$; this observation will be used in the algorithm below. Also note that the sets $W(v_1),\ldots,W(v_p)$ define an $S$-witness structure ${\cal S}$ of the graph $G[X]$, where $S$ is a star with $p-1$ leaves. 

We use the above observations to decide, for each $X\in {\cal X}$, if we can partition $X$ into several witness sets. Recall that, given $\phi$, we only know ${\cal X}$, and not ${\cal W}$. We perform the following procedure on each set $X\in {\cal X}$ that contains more than one vertex. Let $\hat{X} = X \cap N_G(V \setminus X)$ be the set of vertices in $X$ that have at least one neighbor outside $X$. A {\em star-shatter} of $X$ is a partition of $X$ into sets, such that one of them is a connected vertex cover $C$ of $G[X]$ containing every vertex of $\hat{X}$, and each of the others has size 1. The {\em size} of a star-shatter is the size of $C$. A star-shatter of $X$ of minimum size can be found as follows. Recall that we assume the largest witness set of ${\cal W}$ to be of size $d$. Construct a graph $G'$ from the graph $G[X]$ by adding, for each vertex $x\in \hat X$, a new vertex $x'$ and an edge $xx'$. Find a connected vertex cover $C$ of minimum size in $G'$ by applying the algorithm of Proposition~\ref{prp:cvc} for all values of $t$ from $1$ to $d$. Since $\phi$ is ${\cal W}$-compatible and all witness sets of ${\cal W}$ have size at most $d$, such a set $C$ will always be found. Observe that a minimum size connected vertex cover of $G'$ does not contain any vertex of degree 1, which implies that $\hat X\subseteq C$. Hence $C$, together with the sets of size 1 formed by each of the vertices of $X\setminus C$, is a minimum size star-shatter of $X$. If, in ${\cal X}$, we replace $X$ by the sets of this minimum size star shatter of $X$, we obtain a $\tilde{T}$-witness structure of $G$, for some tree $\tilde{T}$ on strictly more vertices than $T''$. We point out that the size of $C$ is at most as big as the size of the only possible big witness set of ${\cal W}$ that $X$ contains. Hence, after repeating the above procedure on each of the sets of ${\cal X}$ that contain more than one vertex, we obtain a desired $T'$-witness structure of $G$, where $T'$ is a tree on at least at many vertices as $T$.

By Proposition~\ref{prp:cvc}, we can find a minimum size star-shatter in $2.44^d \, n^{O(1)}$ time for each set of ${\cal X}$. Since all the other steps can be performed in polynomial time, the overall running time is $2.44^d \, n^{O(1)}$.
\qed
\end{proof}

\begin{theorem}
\label{thm:tree}
There is a Monte Carlo algorithm with running time $4.88^k \, n^{O(1)}$ for {\sc Tree-Con\-tract\-ibility}. If the input graph $G$ is not $k$-contract\-ible to a tree, then the algorithm correctly outputs {\sc no}. If $G$ is $k$-contractible to a tree, then the algorithm correctly outputs {\sc yes} with probability at least $1-\frac{1}{e}$. Moreover, if it outputs {\sc yes}, it also outputs a $T$-witness structure, for a tree $T$, such that $G$ is $k$-contractible to $T$.
\end{theorem}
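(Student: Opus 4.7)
The plan is to adapt the randomized strategy from Theorem~\ref{thm:path-random}, now coupled with the structural procedure of Lemma~\ref{lem:treealg1}. By Lemma~\ref{lem:2conn} we may assume that $G$ is $2$-connected, applying the algorithm separately to each $2$-connected component of $G$ and summing the minimum contraction counts. We sample random $2$-colorings of $G$: whenever a sampled coloring $\phi$ is ${\cal W}$-compatible with a $T$-witness structure ${\cal W}$ of $G$ realizing $G$ as $k$-contractible to the tree $T$, Lemma~\ref{lem:treealg1} recovers in time $2.44^d \cdot n^{O(1)}$ a $T'$-witness structure of $G$ with $|V(T')| \geq |V(T)| \geq n-k$, where $d$ is the largest witness set size of ${\cal W}$. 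After each iteration we check in polynomial time whether the returned partition actually induces a tree and uses at most $k$ contractions; if so we output {\sc yes} together with this witness structure, otherwise we output {\sc no} after all iterations fail.

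Let $|B|$ denote the total number of vertices in the big witness sets of ${\cal W}$. Counting ${\cal W}$-compatible $2$-colorings exactly, each small witness set receives either color freely, while the big witness sets must be monochromatic and determine a proper $2$-coloring of the subgraph of $T$ they span; since that subgraph is bipartite with at least one connected component, a uniformly random $2$-coloring of $G$ is ${\cal W}$-compatible with probability at least $2^{1-|B|}$, so $2^{|B|-1}$ independent samples suffice for total success probability $\geq 1-1/e$. Using only the naive bounds $|B|\leq 2k$ and $d\leq k+1$ from Observation~\ref{obs:big} gives running time $2^{2k}\cdot 2.44^{k+1} = \Theta(9.76^k)$, which is too slow. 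The crucial sharper observation is the linked inequality $|B|+d\leq 2k+2$: writing the big witness sets of ${\cal W}$ as $d=d_1\geq d_2\geq\cdots\geq d_b\geq 2$, we have $|B|\geq d+2(b-1)$, while the contraction budget forces $|B|-b\leq k$, and eliminating $b$ yields the claim.

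With this in hand, the algorithm loops over a guess $d^* \in \{1,\ldots,k+1\}$ for $d$, and for each $d^*$ performs $2^{2k+1-d^*}$ iterations. Each iteration samples a random $2$-coloring of $G$ and runs the procedure of Lemma~\ref{lem:treealg1}, but caps the connected vertex cover search of Proposition~\ref{prp:cvc} at size $d^*$, so that the per-iteration cost is $2.44^{d^*}\cdot n^{O(1)}$. When the guess is correct, i.e.\ $d^*=d$, the linked bound gives $|B|-1\leq 2k+1-d$, and hence $2^{2k+1-d^*}$ iterations are enough to succeed with probability $\geq 1-1/e$. The time spent at value $d^*$ is $2^{2k+1-d^*}\cdot 2.44^{d^*}\cdot n^{O(1)} = 2\cdot 4^k\cdot 1.22^{d^*}\cdot n^{O(1)}$, which is maximized at $d^*=k+1$ where it equals $2.44\cdot 4.88^k\cdot n^{O(1)}$; summing the geometric series over $d^*$ absorbs only a constant factor and yields the claimed total running time $O(4.88^k\cdot n^{O(1)})$.

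The main obstacle is the derivation and exploitation of the linked inequality $|B|+d\leq 2k+2$: it is precisely this bound that lets the $2^{-d^*}$ factor saved from needing fewer samples cancel against the $2.44^{d^*}$ factor paid per call to Lemma~\ref{lem:treealg1}, aligning the two competing exponentials to give the target base $4\cdot 1.22 = 2\cdot 2.44 = 4.88$ rather than the naive $9.76$.
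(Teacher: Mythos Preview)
Your proposal is correct and follows essentially the same strategy as the paper: guess the maximum witness-set size $d$, run $\Theta(2^{2k-d})$ random $2$-coloring trials, and invoke Lemma~\ref{lem:treealg1} at cost $2.44^d n^{O(1)}$ per trial, so that the product telescopes to $4.88^k n^{O(1)}$. Your derivation of the linked inequality $|B|+d\le 2k+2$ is in fact cleaner than the paper's; the paper states the bound on $|B|$ as $d+2(k-d-1)=2k-d-2$, which carries a sign slip (it should be $d+2(k-d+1)=2k-d+2$, matching your version), though this does not affect the asymptotic conclusion.
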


\begin{proof}
We describe such a randomized algorithm. Let $G$ be an input graph. The algorithm has an outer loop, which guesses the size $d$ of the largest witness set of a possible $T$-witness structure ${\cal W}$ of $G$ for a tree $T$. In particular, the outer loop iterates over all values of $d$ from $1$ to $k+1$. For each value of $d$, the algorithm runs the following inner loop, which is repeated $2^{2k-d-2}$ times. At each iteration of the inner loop, the algorithm generates a random 2-coloring $\phi$ of G. It then computes a minimum size star-shatter for each of the monochromatic components of $G$ with respect to $\phi$, using the $2.44^d \, n^{O(1)}$ time procedure described in the proof of Lemma~\ref{lem:treealg1} with the value $d$ determined by the outer loop. If this procedure yields a $T'$-witness structure of $G$ for a tree $T'$ on at least $n-k$ vertices at some iteration of the inner loop, then the algorithm returns this witness structure, answers {\sc yes}, and terminates. If none of the iterations of the inner loop yields such a witness structure, the outer loop picks the next value of $d$. If none of the iterations of the outer loop yields a $T'$-witness structure of $G$ for a tree $T'$ on at least $n-k$ vertices, then the algorithm returns {\sc no}. Observe that 
$\sum_{d=1}^{k+1} 2^{2k-d-2}2.44^d\leq 4.88^k$. It follows that the total running time of the algorithm is $4.88^k \, n^{O(1)}$.

Suppose $G$ is $k$-contractible to a tree $T$, and let $\mathcal W$ be a $T$-witness structure of $G$ whose largest witness set has size $d$. Note that $d\leq k+1$ by Observation~\ref{obs:big}. Let $\psi$ be a 2-coloring of $G$ such that each of the big witness sets of ${\cal W}$ is monochromatic with respect to $\psi$, such that $\psi(W(u))\neq \psi(W(v))$ whenever $uv$ is an edge in $T$, and such that the vertices in the small witness sets are all colored 1. Observe that $\psi$ is a ${\cal W}$-compatible 2-coloring of $G$, as is any other 2-coloring of $G$ that coincides with $\psi$ on all the vertices of the big witness sets of ${\cal W}$. 

As a result of Observation 1, the number of vertices contained in big witness sets is at most $d+2(k-d-1)=2k-d-2$. Hence, the probability that a random $2$-coloring $\phi$ coincides with $\psi$ on all vertices contained in big witness sets, and thus is ${\cal W}$-compatible, is at least $\frac{1}{2^{2k-d-2}}$. Recall that at every iteration of the outer loop, a random 2-coloring of $G$ is generated at each iteration of the inner loop. The probability that none of the $2^{2k-d-2}$ random 2-colorings generated by the inner loop is ${\cal W}$-compatible, when the value of $d$ in the outer loop is correct, is at most $\bigl(1-\frac{1}{2^{2k-d-2}}\bigr)^{2^{2k-d-2}} \leq \frac{1}{e}$.
\qed
\end{proof}

Using Theorem~\ref{thm:universal}, we can derandomize the randomized algorithm for {\sc Tree-Contract\-ibility} in the same way we did for {\sc Path-Contractibility}.

\begin{theorem}
\label{thm:treedeterministic}
{\sc Tree-Contractibility} can be solved in time $4.88^{k}\,  n^{O(1)}$.
\end{theorem}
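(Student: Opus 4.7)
The plan is to derandomize the Monte Carlo algorithm of Theorem~\ref{thm:tree} using $(n,t)$-universal sets, paralleling the derandomization of the path case in Theorem~\ref{thm:path-notrandom}. By Lemma~\ref{lem:2conn}, I may assume the input graph $G$ is 2-connected. The overall structure of the algorithm is unchanged: keep the outer loop over $d\in\{1,\ldots,k+1\}$ guessing the size of the largest witness set, and, for each candidate 2-coloring of $G$, invoke the $2.44^d\,n^{O(1)}$-time star-shatter procedure of Lemma~\ref{lem:treealg1}.

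The only substantive change is to the inner loop. In the randomized version, the inner loop for a fixed $d$ draws $2^{2k-d-2}$ random 2-colorings, a count that matches the bound $1/2^{2k-d-2}$ on the probability that a random coloring coincides with the target ${\cal W}$-compatible coloring $\psi$ on the at most $2k-d-2$ vertices lying in big witness sets. For the deterministic algorithm I would, for each $d$, construct via Theorem~\ref{thm:universal} an $(n,2k-d-2)$-universal set ${\cal F}_d$ of size $2^{2k-d-2+O(\log^2 k)}\log n$, and let the inner loop enumerate every $\phi\in{\cal F}_d$. By the defining property of universal sets, ${\cal F}_d$ contains a 2-coloring that agrees with $\psi$ on all vertices of big witness sets, so when $d$ is guessed correctly some $\phi\in{\cal F}_d$ is ${\cal W}$-compatible, and Lemma~\ref{lem:treealg1} then returns a witness structure whose tree has at least as many vertices as $T$. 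Correctness is therefore immediate from Theorem~\ref{thm:tree} once the universal-set property is in hand.

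The remaining bookkeeping is the running-time analysis. For each $d$, the work is $|{\cal F}_d|\cdot 2.44^d\cdot n^{O(1)} = 2^{O(\log^2 k)}\cdot 2^{2k-d-2}\cdot 2.44^d\cdot n^{O(1)}$. Summing over $d\in\{1,\ldots,k+1\}$ gives $2^{O(\log^2 k)}\cdot n^{O(1)}\cdot \sum_{d=1}^{k+1} 2^{2k-d-2}\cdot 2.44^d$, and the inner sum is exactly the quantity already bounded by $4.88^k$ in the proof of Theorem~\ref{thm:tree}. Absorbing the subexponential factor $2^{O(\log^2 k)}$ yields a total running time of $4.88^k\cdot n^{O(1)}$.

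The main thing to watch is the stratification of the universal sets by $d$: using a single $(n,2k)$-universal set across all iterations would cost roughly $4^k$ per inner loop, which combined with the factor $2.44^d$ summed over $d$ would blow up to about $(4\cdot 2.44)^k=9.76^k$. Matching the universal-set parameter to the tight bound $2k-d-2$ on the number of colour-sensitive vertices is precisely what lets the geometric series telescope to $4.88^k=2\cdot 2.44^k\cdot 2^{k-2}$ rather than $9.76^k$, and this is the only nontrivial point in the derandomization.
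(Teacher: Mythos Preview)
Your proposal is correct and follows essentially the same approach as the paper: for each value of $d$ in the outer loop, replace the $2^{2k-d-2}$ random 2-colorings by an $(n,2k-d-2)$-universal set, and sum over $d$ exactly as in the proof of Theorem~\ref{thm:tree}. Your closing discussion of why the universal-set parameter must be stratified by $d$ is a useful elaboration not present in the paper's terse proof; the only slip is the parenthetical identity $4.88^k = 2\cdot 2.44^k\cdot 2^{k-2}$, which is off by a factor of~$2$ but plays no role in the argument.
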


\begin{proof}
Let $(G,k)$ be an instance of {\sc Tree-Contractibility}. For each value of $d$, we construct an $(n,2k-d-2)$-universal set $\cal F$, and let the inner loop iterate over all 2-colorings $\phi \in \cal F$ instead of randomly generated $2$-colorings. For each $d$, the size of ${\cal F}$ is $2^{2k-d-2 + \log^2 (2k-d-2)} \log n$, and ${\cal F}$ can be constructed in $2^{2k-d-2 + \log^2 (2k-d-2)} \log n$ time, by Theorem~\ref{thm:universal}. Summating over all values of $d$ from 1 to $k+1$, just as we did in the proof of Theorem~\ref{thm:tree}, shows that this deterministic algorithm runs in time $4.88^k \, n^{O(1)}$.
\qed
\end{proof}

\section{Concluding remarks}

The number of edges to contract in order to obtain a certain graph property is a natural measure of how close the input graph is to having that property, similar to the better established similarity measures of the number of edges to delete and the number of vertices to delete. The last two measures are well studied when the desired property is being acyclic or being a path, defining some of the most widely known and well studied problems within Parameterized Complexity. Inspired by this, we gave kernelization results and fast fixed parameter tractable algorithms for {\sc Path-Contractibility} and {\sc Tree-Contractibility}, parameterized by the similarity measure of the number of contracted edges. We think our results motivate the parameterized study of similar problems, an example of which is {\sc Interval-Contractibility}. Modifying a given graph to an interval graph has applications in computational biology. The similarity measures of the input graphs are often small in practice, which makes the problem well suited for parameterized studies. We conclude by posing the fixed parameter tractability of {\sc Interval-Contractibility} as an open question.

\bigskip
\noindent
{\bf Acknowledgment.} The authors are indebted to Saket Saurabh for valuable suggestions and comments.

\end{document}